\crefname{algorithm}{Algorithm}{Algorithms}
\crefname{algocf}{Algorithm}{Algorithms}
    \theoremstyle{plain}
    \crefname{conjecture}{Conjecture}{Conjectures}
    \newtheorem{fact}{Fact}
    \crefname{fact}{Fact}{Facts}
\newcommand{\whp}{\textrm{whp}\xspace}
\renewcommand{\O}{\mathcal{O}}
\DeclareMathOperator{\Po}{Po}
\DeclareMathOperator{\Bin}{Bin}
\DeclareMathOperator{\E}{\mathbb{E}}
\def\e{\mathrm{e}}
\def\U{\mathcal{U}}
\def\ms{\textrm{\textmu s}}
\def\ns{\textrm{ns}}
\def\row{\textsf{row}}
\def\piv{\mathrm{piv}}
\def\pos{\mathrm{pos}}
\def\refgauss{\texorpdfstring{{\small\upshape\sffamily SGAUSS}}{SGAUSS}\xspace}
\def\refcfrh{\texorpdfstring{{\small\upshape\sffamily CFRH}}{CFRH}\xspace}
\def\construct{\mbox{\upshape\sffamily construct}\xspace}
\def\query{\mbox{\upshape\sffamily query}\xspace}
\newcommand{\ignore}[1]{}
\title{Efficient Gauss Elimination for Near-Quadratic Matrices
with One Short Random Block per Row, with Applications}
\titlerunning{Near-Quadratic Matrices with One Short Random Block per Row}
\author{Martin Dietzfelbinger}{Technische Universität Ilmenau, Germany}{martin.dietzfelbinger@tu-ilmenau.de}{https://orcid.org/0000-0001-5484-3474}{}%mandatory, please use full name; only 1 author per \author macro; first two parameters are mandatory, other parameters can be empty.
\author{Stefan Walzer}{Technische Universität Ilmenau, Germany}{stefan.walzer@tu-ilmenau.de}{https://orcid.org/0000-0002-6477-0106}{}
\authorrunning{M.\,Dietzfelbinger and S.\,Walzer}%mandatory. First: Use abbreviated first/middle names. Second (only in severe cases): Use first author plus 'et al.'
\keywords{Random Band Matrix, Gauss Elimination, Retrieval, Hashing, Succinct Data Structure, Randomised Data Structure, Robin Hood Hashing, Bloom Filter}%mandatory
\let\paragraph=\subparagraph
\begin{document}
\maketitle

\begin{abstract}
    	In this paper we identify a new class of sparse near-quadratic random Boolean matrices
			that have full row rank over $\mathbb{F}_2=\{0,1\}$ with high probability and can be transformed into echelon form
			in almost linear time by a simple version of Gauss elimination.
			The random matrix with dimensions $n(1-ε) \times n$
			is generated as follows: In each row, identify a block of length $L=O((\log n)/ε)$
			at a random position. The entries outside the block are 0, the entries inside the block are given by fair coin tosses. 
			Sorting the rows according to the positions of the blocks transforms the matrix into a kind of band matrix,
			on which, as it turns out, Gauss elimination works very efficiently with high probability. For the proof, the effects of Gauss elimination
			are interpreted as a (``coin-flipping'') variant of Robin Hood hashing, whose behaviour can be 
			captured in terms of a simple Markov model from queuing theory. Bounds for expected construction time
			and high success probability follow from results in this area. They readily extend to larger finite fields in place of $\mathbb{F}_2$.	
				
			By employing hashing, this matrix family leads to a new implementation of a \emph{retrieval} data structure, 
			which represents an arbitrary function $f\colon S \to \{0,1\}$ for some set $S$ of $m=(1-ε)n$ keys. 
			It requires $m/(1-ε)$ bits of space, construction takes $\O(m/ε^2$) expected time on a word RAM, while queries take $\O(1/ε)$ time
			and access only one contiguous segment of $\O((\log m)/ε)$ bits in the representation
			($\O(1/ε)$ consecutive words on a word RAM). The method is readily implemented and highly practical, and it is competitive 
			with state-of-the-art methods.  In a more theoretical variant, which works only for unrealistically large $S$, 
			we can even achieve construction time $\O(m/ε)$ and query time $\O(1)$,
			accessing  $\O(1)$ contiguous memory words for a query.
			By well-established methods the retrieval data structure leads to efficient constructions of 
			(static) perfect hash functions and (static) Bloom filters with almost optimal space and very local storage access patterns for queries. 				
\end{abstract}

\section{Introduction}\label{sec:intro}
%%%%%%%%%%%%%%%%%%%%%%%%%%%%%%%%%%%%%%%%%%%%%%%%%%%%%%%%%%%

\subsection{Sparse Random Matrices}\label{subsec:intro:systems}
%*************************************************************************
In this paper we introduce and study a new class of sparse random matrices over 
finite fields, which give rise to linear systems that are efficiently solvable with high probability%
\footnote{Events occur ``with high probability (whp)'' if they occur with probability $1-\O(m^{-1})$.}%
. For concreteness and ease of notation, we describe the techniques for the field $\mathbb{F}_2=\{0,1\}$.
(The analysis applies to larger fields as well, as will be discussed below.)
A matrix $A$ from this class has $n$ columns and $m=(1-\varepsilon)n$ rows for some small $\varepsilon > 0$.
We always imagine that a right hand side $\vec b\in\{0,1\}^m$ is given and that we wish to solve the system $A\vec z=\vec b$
for the vector of unknowns $\vec z$. 

\def\R{\mathcal{R}}
The applications (see \cref{subsec:retrieval}) 
dictate that the rows of $A$ are stochastically independent and are all chosen according to the same distribution $\R$ on $\{0,1\}^n$.
 Often, but not always, $\R$ is the uniform distribution on some pool $R ⊆ \{0,1\}^n$ of admissible rows. The following choices were considered in the literature.

\begin{enumerate}[(1)]
        • If $R = \{0,1\}^n$, then $A$ has full row rank whp for any $ε = ω(1/n)$. 
				In fact, the probability for full row rank is $>0.28$ even for $ε = 0$, 
				see {e.g.}~\cite{Cooper2000,P:An_Optimal:2009}. Solving time is $\tilde{\O}(n³)$.
        • A popular choice for $R$ is the set of vectors with 1's in precisely $k$ positions, 
				for constant $k$. Then $ε =\mathrm{e}^{-θ(k)}$ is sufficient for solvability 
				whp~\cite{PS:The:Satisfiability:2016}. 
				Solving time is still $\tilde{\O}(n³)$ if Gauss elimination is used and $\O(n²)$ if Wiedemann's algorithm~\cite{W:Solving:1986} is used, 
				but heuristics exploiting the sparsity of $A$ help considerably \cite{Vigna:Fast-Scalable-Construction-of-Functions:2016}.
       • In the previous setting with $k=3$ and $\varepsilon 
        \ge 0.19$, linear running time can be achieved with a simple greedy algorithm,
        since then the matrix can be brought into echelon form by row and column \emph{exchanges} 
				alone~\cite{BPZ:Practical:2013,HMWC:Graphs:1993,Molloy05:Cores-in-random-hypergraphs}. %, without any proliferation of nonzero entries.
        Using $k > 3$ is pointless here, as then the required value of $\varepsilon$ increases. 
				• Luby et al.~\cite{LMSS:Efficient_Erasure:2001,LMSSS:Loss-Resilient:1997} study ``loss-resilient codes'' based on certain random bipartite graphs.				
				Translating their considerations into our terminology shows that at the core of their construction 
				is a distribution on $(1-ε)n\times n$-matrices with randomly chosen sparse rows as well. 
				Simplifying a bit, a number $D=\O(1/\varepsilon)$ is chosen and a weight sequence is carefully selected
				that will give a row at most $D$ many 1's and on average $O(\log D)$ many 1's (in random positions). 
				It is shown in~\cite{LMSS:Efficient_Erasure:2001,LMSSS:Loss-Resilient:1997} that such matrices not only have full row rank with high probability, 
				but that, as in \textsf{(3)}, row and column \emph{exchanges} suffice to obtain an echelon form whp.
				This leads to a solving time of $\O(n\log(1/ε))$ for the corresponding linear system. 		
				\item The authors of the present work describe in a simultaneous paper~\cite{DieWal:peeling:2019} the construction of 
				sparse $(1-\varepsilon)n\times n$ matrices for very small (constant) $\varepsilon$, with a fixed number of 1's per row, 
				which also allow solving the corresponding system by row and column exchanges. 
				(While behaviour in experiments is promising, determining the behaviour of the construction for arbitrarily small $\varepsilon$ is an open problem.) 
			• In a recent proposal \cite{DW:Retrieval-log-extra-bits:2019} by the authors of the present paper, 
			  a row $r \sim \R$ contains two \emph{blocks} of $\Theta(\log n)$ random bits
        at random positions (block-aligned) in a vector otherwise filled with 0's.
        It turned out that in this case even $\varepsilon = \O((\log n) / n)$ will give solvability with high probability. 
        Solution time is again about cubic (Gauss) resp. quadratic (Wiedemann), with heuristic pre-solvers cushioning the blow partly in practice.
				\end{enumerate}
Motivated by the last construction, we propose an even simpler choice for the distribution $\R$: A row $r$ consists of 0's except for 
one randomly placed block of some length $L$, which consists of random bits. 
It turns out that $L=\O((\log n)/\varepsilon)$ is sufficient to achieve solvability with high probability. 
The $L$-bit block fits into $\O(1)$ memory words as long as $\varepsilon$ is constant. 
Our main technical result (Theorem~\ref{mainProp}) is that the resulting random matrix has full row rank whp. 
Moreover, if this is the case then sorting the rows by starting points of the blocks 
followed by a simple version of Gauss elimination produces an echelon form of the matrix and a solution to the linear system.
The expected number of field operations is $\O(n L /\varepsilon)$, 
which translates into expected running time $\O(n/\varepsilon^2)$ on a word RAM.
For the proof, we establish a connection to a particular version of Robin Hood hashing, 
whose behaviour in turn can be understood by reducing it to a well-known situation in queuing theory. 
(A detailed sketch of the argument is provided in~\cref{subsec:proof:ideas}.)

To our knowledge, this class of random matrices has not been considered before. 
However, \emph{deterministic} versions of matrices similar to these random ones
have been thoroughly studied in the last century, for both infinite and finite fields. 
Namely, sorting the rows of our matrices yields matrices that
with high probability resemble \emph{band matrices},
where the nonzero entries in row $i$ are within a restricted range around column $\lfloor i/(1-\varepsilon)\rfloor$.
In the study of band matrices one usually has $\varepsilon=0$ and assumes that the matrix is nonsingular. 
Seemingly the best known general upper time bound for 
the number of field operations needed for solving band quadratic systems with bandwidth $L$ are $\O(n L^{\omega-1})=\O(n ((\log n)/\varepsilon)^{\omega-1})$,
where $\omega$ is the matrix multiplication exponent, see~\cite{Eberly:1992:BandMatrix,GolVLoan:1996:MatrixComp,PanSobzeAtinp:2001:BandedMatrices}.

\subsection{Retrieval}\label{subsec:retrieval}
%******************************************************************
One motivation for studying random systems as described above 
comes from data structures for solving the \emph{retrieval problem}, which can be described as follows:
Some ``\emph{universe}'' $\U$ of possible keys is given, 
as is a function $f \colon S\to W$, where $S \subseteq \U$ has finite size $m$ and $W=\{0,1\}^r$ for some $r\ge1$. 
A \emph{retrieval data structure} \cite{BPZ:Simple:2007,CKRT:The_Bloomier:2004,DP:Succinct:2008,P:An_Optimal:2009} makes it possible to recover $f(x)$ quickly for arbitrary given $x\in S$.
We do not care what the result is when $x \notin S$, which makes the retrieval situation different
from a dictionary, where the question ``$x\in S\,$?'' must also be decided.
A retrieval data structure consists of 
\begin{itemize}
	\item an algorithm $\construct$, which takes $f$ as a list of pairs
  (and maybe some parameters) as input and constructs an object $\textsf{DS}_f$, and
\item an algorithm $\query$, which on input $x\in \U$ and $\textsf{DS}_f$ outputs an element of $W$,
with the requirement that $\query(\textsf{DS}_f,x)=f(x)$ for all $x\in S$. 
\end{itemize}
The essential performance parameters of a retrieval data structure are:
\begin{itemize}
	\item the space taken up by $\textsf{DS}_f$ (ideally $(1+ε)m$ bits of memory for some small $ε>0$),
	\item the running time of $\construct$ (ideally $\O(m)$), and
	\item  the running time of $\query$ (ideally a small constant in the worst case, possibly dependent on $\varepsilon$, 
	and good cache behaviour).
\end{itemize}		
		In this paper we 
		concentrate on the case most relevant in practice, 
		namely the case of small constant $r$, in particular on%
		\footnote{Every solution for this case gives a solution for larger $r$ as well, with a slowdown not larger than $r$.
			In our case, this slowdown essentially only affects queries, not construction, 
			since the Gauss elimination based algorithm can trivially be extended 
			to simultaneously handle $r$ right hand sides $\smash{\vec b₁,…,\vec b_r}$ and produce $r$ solution vectors $\smash{\vec z₁,…,\vec z_r}$.
			This change slows down $\construct$ by a factor of $1+r/L = 1+\O(\varepsilon r/\log n)$.}
 $r=1$.

A standard approach is as follows \cite{BPZ:Simple:2007,CKRT:The_Bloomier:2004,DP:Succinct:2008,P:An_Optimal:2009}. 
Let $f\colon S \to \{0,1\}$ be given and let $n = m/(1-\varepsilon)$ for some $\varepsilon>0$. Use hashing
to construct a mapping
$\row\colon \U \to \{0,1\}^n$ such that $(\row(x))_{x ∈ S}$ is (or behaves like) a family of independent random variables drawn from a suitable distribution $\R$ on $\{0,1\}^n$.
Consider the linear system $(\langle \row(x),\vec z \,\rangle=f(x))_{x\in S}$. 
In case the vectors $\row(x)$, $x\in S$, are linearly independent, this system is solvable for $\vec z$.
Solve the system and store the bit vector $\vec z$ of $n$ bits (and the hash function used) as $\textsf{DS}_f$.
Evaluation is by $\query(\textsf{DS}_f,x)=\langle \row(x),\vec z \, \rangle$, for $x\in\U$.
The running time of \textsf{construct} is essentially the time for solving the linear system,
and the running time for \textsf{query} is the time for evaluating the inner product. 

A common and well-explored trick for reducing the construction time 
\cite{BKZ:A_Practical:2005,DR:Applications:2009,P:An_Optimal:2009,Vigna:Fast-Scalable-Construction-of-Functions:2016} is to split the key set into ``chunks'' of 
size $\Theta(C)$ for some suitable $C$ and constructing separate retrieval structures for the 
chunks. The price for this is twofold: In queries, one more hash function must be evaluated
and the proper part of the data structure has to be located;
regarding storage space one needs an array of $\Omega(m/C)$ pointers. 
In this paper, we first concentrate on a ``pure'' construction.
The theoretical improvements possible by applying the splitting technique will be discussed briefly in~\cref{sec:input:partitioning}.
The splitting technique is also used in experiments for our construction
in \cref{sec:experiments} to keep the block length small. 
In this context it will also be noted the related ``split-and-share'' technique from~\cite{DW07:Balanced:2007,DR:Applications:2009}
can be used to get rid of the assumption that fully random hash functions are available for free.

Our main result regarding the retrieval problem follows effortlessly from the analysis of the new random linear systems (formally stated as \cref{mainProp}). 
\begin{theorem}
    \label{thm:main}
    Let $\U$ be a universe. Assume the context of a word RAM with oracle access to fully random hash functions on $\U$. 
		%and a word size $w = Ω(\log m)$.   %   Part of the model! 
		Then for any $ε > 0$ there is a retrieval data structure such that for all $S ⊆ \U$ of size $m$
    \begin{enumerate}[{\upshape(i)}]\setlength\itemsep{0em}
            • \label{item:construct-succeeds} \construct succeeds with high probability.
            • \construct has expected running time $\O(\frac{m}{ε²})$.
            • The resulting data structure $\mathsf{DS}_f$ occupies at most $(1+ε)m$ bits.
            • \query has running time $\O(\frac{1}{ε})$ and accesses $\O(\frac{1}{ε})$ consecutive words in memory.
    \end{enumerate}
\end{theorem}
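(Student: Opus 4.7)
The plan is to apply the standard retrieval-via-linear-system reduction using the random matrix family analysed in \cref{mainProp}. Given $f\colon S\to\{0,1\}$ of size $m$, I set $n = \lceil m/(1-\varepsilon')\rceil$ for a suitable $\varepsilon' = \Theta(\varepsilon)$ and $L = \Theta((\log n)/\varepsilon')$. Using the two fully random hash functions $h\colon\U\to\{0,\dots,n-L\}$ and $g\colon\U\to\{0,1\}^L$, I define $\row(x)$ to be the vector with the $L$ bits of $g(x)$ placed starting at position $h(x)$ and zeros elsewhere. By the oracle assumption on the hash functions, the matrix $A\in\{0,1\}^{m\times n}$ with rows $\row(x)$ for $x\in S$ has exactly the distribution studied in \cref{mainProp}.

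For construction I form the right hand side $\vec b = (f(x))_{x\in S}$, sort the rows by block starting position (doable in $\O(m)$ time by a bucket/radix sort on the $h$-values), and then run the sorted Gauss elimination procedure that underlies \cref{mainProp}. That theorem guarantees that $A$ has full row rank whp, that the procedure produces an echelon form (and thereby a solution $\vec z$) whp, and that the expected work is $\O(nL/\varepsilon') = \O(m/\varepsilon^2)$ field operations. Each row operation manipulates an $L$-bit block occupying $\O(L/w) = \O(1/\varepsilon)$ machine words, so the word-RAM time is of the same order. The data structure $\mathsf{DS}_f$ consists of $\vec z$ stored as a plain bitstring of length $n$; by choosing $\varepsilon'$ slightly smaller than $\varepsilon$ so that $1/(1-\varepsilon')\leq 1+\varepsilon$, we obtain $n\leq(1+\varepsilon)m$ bits. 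Hash function handles are free in the oracle model. This establishes items~(i)--(iii).

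For queries, $\query(\mathsf{DS}_f,x)$ computes $h(x)$ and $g(x)$, reads the $L$-bit block of $\vec z$ starting at bit position $h(x)$, and returns $\langle g(x),\vec z_{[h(x),h(x)+L)}\rangle \bmod 2$. The block straddles at most $\lceil L/w\rceil + 1 = \O(1/\varepsilon)$ consecutive machine words, and the $\F_2$ inner product is computed by word-parallel AND followed by a popcount parity per word, giving $\O(1/\varepsilon)$ time with memory access confined to those $\O(1/\varepsilon)$ contiguous words. This establishes item~(iv).

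The only subtle point, and the only place where care is needed, is calibrating $\varepsilon'$ against $\varepsilon$ and handling boundary effects at the right end of the row: restricting $h(x)$ to $\{0,\dots,n-L\}$ rather than $\{0,\dots,n-1\}$ slightly deforms the distribution of starting positions compared with the setup of \cref{mainProp}, but this is absorbed into the constants hidden in the choice of $L$ and $\varepsilon'$. Alternatively one can let blocks wrap around cyclically, in which case the Robin-Hood / queueing analysis underlying \cref{mainProp} goes through unchanged. All of the genuine probabilistic content is packaged inside \cref{mainProp}, so \cref{thm:main} is essentially a mechanical unpacking of that statement.
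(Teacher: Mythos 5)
Your proof follows essentially the same route as the paper: hash each key to a starting position and an $L$-bit pattern, run the sorted Gauss elimination of \cref{mainProp} on the resulting band system, store the solution vector $\vec z$ as $\mathsf{DS}_f$, and answer queries by an $L$-bit inner product. Two small points deserve flagging. First, you restrict $h$ to $\{0,\dots,n-L\}$ to keep every block inside $[n]$, but then assert that $A$ has ``exactly the distribution studied in \cref{mainProp}.'' It does not: \cref{mainProp} draws starting positions uniformly from $[n]$ and works with $n+L-1$ columns, so either restricting the range of $h$ or wrapping around changes the model and would, strictly speaking, require re-opening the Robin Hood/queuing analysis (your remark that this is ``absorbed into the constants'' is plausible but not an argument). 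The paper sidesteps this entirely by simply letting $\vec z$ have $n+L-1$ bits, paying the extra $L-1=\O((\log m)/\varepsilon)$ bits of space (negligible), and then rescaling $\varepsilon\mapsto\varepsilon/2$ to get item (iii) literally. Second, your intermediate claim ``$\O(nL/\varepsilon')=\O(m/\varepsilon^2)$ field operations'' is off: that count is $\Theta(m\log m/\varepsilon^2)$ field operations; the $\O(m/\varepsilon^2)$ bound is for word-RAM time, obtained because \cref{mainProp} gives $\O(m/\varepsilon)$ row additions each costing $\O(1/\varepsilon)$ words. Your following sentence recovers the correct word-RAM accounting, so this is a wording slip rather than a substantive error.
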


% Utilising the analysis of the algorithm for solving the new type of linear system it is not hard to prove this theorem. 

\subsection{Machine Model and Notation}
%************************************************
For a positive integer $k$ we denote $\{1,\dots,k\}$ by $[k]$.
The number $m$ always denotes the size of a domain -- the number of keys to hash, 
the size of a function for retrieval or the number of rows of a matrix.
A (small) real number $ε>0$ is also given.
The number $n$ denotes the size of a range. We usually have $m=(1-\varepsilon)n$.
In asymptotic considerations we always assume that $ε$ is constant and $m$ and $n$ tend to $\infty$,
so that for example the expression $\O(n/\varepsilon)$ denotes  a function that is bounded by $cm/\varepsilon$ for a constant $c$, 
for all $m$ bigger than some $m(\varepsilon)$.
By $\langle \vec y,\vec z\,\rangle$ we denote the inner product of two vectors $ \vec y$ and $\vec z$.
%Events occur ``with high probability (whp)'' if they occur with probability $\O(1/m)$.
As our computational model we adopt the word RAM with memory words comprising $\Omega(\log m)$ bits,
in which an operation on a word takes constant time. In addition to AC$_0$ instructions
we will need the \textsc{parity} of a word as an elementary operation. For simplicity we assume this can be carried out in constant time,
which certainly is realistic for standard word lengths like 64 or 128.
In any case, as the word lengths used are never larger than $\O(\log m)$,
one could tabulate the values of \textsc{parity} for inputs of size $\frac12\log m$ 
in a table of size $\O(\sqrt{m}\log m)$ bits to achieve constant evaluation time for inputs comprising a constant number of words.

\subsection{Techniques Used}\label{subsec:techniques}
%******************************************************************
We use \emph{coupling} of random variables $X$ and $Y$ (or of processes $(X_i)_{i\ge1}$ and $(Y_i)_{i\ge1}$).
By this we mean that we exhibit a single probability space on which $X$ and $Y$ (or $(X_i)_{i\ge1}$ and $(Y_i)_{i\ge1}$) are defined,
so that there are interesting \emph{pointwise} relations between them, like $X\le Y$, or $X_i \le Y_i + a$ for all $i\ge1$, for a constant $a$.
Sometimes these relations hold only conditioned on some (large) part of the probability space. We will make use of the following observation.
If we have random variables $U_0,\dots,U_k$ with couplings, {i.e.} joint distributions, of $U_{\ell-1}$ and $U_\ell$, for $1\le \ell \le k$,
then there is a common probability space on which all these random variables are defined
and the joint distribution of $U_{\ell-1}$ and $U_\ell$ is as given.% 
\footnote{We do not prove this formally, since arguments like this belong to basic probability theory or measure theory.
The principle used is that the pairwise couplings give rise to conditional expectations $\E(U_\ell \mid U_{\ell-1})$.
Arguing inductively, given a common probability space for $U_1,\dots,U_{\ell-1}$ and $\E(U_\ell \mid U_{\ell-1})$, 
one can obtain a common probability space for $U_1,\dots,U_\ell$ so that 
$(U_1,\dots,U_{\ell-1})$ is distributed as before and $\E(U_\ell \mid U_1,\dots,U_{\ell-1}) = \E(U_\ell \mid U_{\ell-1})$.
-- This is practically the same as the standard argument that shows that a sequence of conditional expectations 
gives rise to a corresponding Markov chain on a joint probability space.} 

%%%%%%%%%%%%%%%%%%%%%%%%%%%%%%%%%%%%%%%%%%%%%%%%%%%%
\section{Random Band Systems that Can be Solved Quickly}\label{sec:random:systems}
%%%%%%%%%%%%%%%%%%%%%%%%%%%%%%%%%%%%%%%%%%%%%%%%%%%%

The main topic of this paper are matrices generated by the following random process. 
Let $0 < ε < 1$ and $n ∈ ℕ$. For a number $m = (1-ε)n$ of rows and some number $L\ge1$ 
we consider a matrix $A=(a_{ij})_{i∈[m],\,j∈[n+L-1]}$ over the field $\mathbb{F}₂$, 
chosen at random as follows. For each row $i ∈ [m]$ a \emph{starting position} $s_i ∈ [n]=\{1,\dots,n\}$ is chosen uniformly at random. 
The entries $a_{ij}$, $s_i ≤ j < s_i+L$ form a \emph{block} of fully random bits, all other entries in row $i$ are $0$.

In this section we show that for proper choices of the parameters such a random matrix will have full row rank 
and the corresponding systems $A\vec z=\vec b$ will be solvable very efficiently whp. 
Before delving into the technical details, we sketch the main ideas of the proof. 

\subsection{Proof Sketch}\label{subsec:proof:ideas}
%**************************************************************
As a starting point, we formulate a simple algorithm,
a special version of Gaussian elimination,
for solving linear systems $A\vec z=\vec b$ as just described.
We first sort the rows of $A$ by the starting position of their block. 
The resulting matrix resembles a band matrix, and we apply standard Gaussian elimination to it,
treating the rows in order of their starting position. 
Conveniently, there is no ``proliferation of 1's'', {i.e.} we never produce a 1-entry outside of any row's original block. 
In the round for row $i$, the entries $a_{ij}$ for $j=s_i,\dots, s_i + L-1$ are scanned. 
If column $j$ has been previously chosen as pivot then $a_{ij} = 0$. 
Otherwise, $a_{ij}$ is a random bit. While this bit may depend in a complex way on
the original entries of rows $1,\dots,i$ (apart from position $(i,j)$), 
for the analysis we may simply imagine that $a_{ij}$ is only chosen now by flipping a fair coin. 
This means that we consider eligible columns from left to right,  and the first $j$ for which the coin flip turns up 1 becomes the pivot column for row $i$. 
This view makes it possible to regard choosing pivot columns for the rows 
as probabilistically equivalent to a slightly twisted version of Robin Hood hashing.
Here this means that $m$ keys $x_1,\dots,x_m$ with random hash values in $\{1,\dots,n+L-1\}$ are given and, in order of increasing hash values, are inserted 
in a linear probing fashion into a table with positions $1,\dots,n+L-1$ (meaning that for $x_i$ cells $s_i,s_{i+1},\dots$ are inspected). 
The twist is that whenever a key probes an empty table cell
flipping a fair coin decides whether it is placed in the cell or has to move on to the next one.
The resulting position of key $x_i$ is the same as the position of the pivot for row $i$.
As is standard in the precise analysis of linear probing hashing
we switch perspective and look at the process from the point of view of cells $1,2,\dots,n,\dots,n+L-1$.
Associated with position (``time'') $j$ is the set of keys that probe cell $j$ (the ``queue''), and the quantity to 
study is the length of this queue. It turns out that the average queue length determines the overall cost of the row additions,
and that the probability for the maximum queue length to become too large is decisive for bounding
the success probability of the Gaussian elimination process.   
The first and routine step in the analysis of the queue length is to “Poissonise” arrivals such that the evolution of the queue length becomes a Markov chain. 
A second step is needed to deal with the somewhat annoying possibility that 
in a cell all keys that are eligible for this cell reject it because of their coin flips. 
We end up with a standard-queue (an ``M/D/1 queue'' in Kendall notation) and can use existing results 
from queuing theory to read off the bounds regarding the queue length needed to complete the analysis.

The following subsections give the details. 

\subsection{A Simple Gaussian Solver}\label{subsec:sgauss}
%******************************************************************
\SetKwProg{algo}{Algorithm}{:}{}
\SetKw{Continue}{continue}
\SetKw{Break}{break}
\SetKw{With}{with}

We now describe the algorithm to solve linear systems involving the random matrices described above. 
This is done by a variant or Gauss elimination, which will bring the matrix into echelon form (up to 
leaving out inessential column exchanges) and then apply back substitution.  

Given a random matrix $A=(a_{ij})_{i∈[m],\,j∈[n+L-1]}$ as defined above, with blocks of length $L$
starting at positions $s_i$, for $i∈[m]$, as well as some $\vec b ∈ \{0,1\}^{m}$, 
we wish to find a solution $\vec{z}$ to the system $A\vec{z} = \vec{b}$. Consider algorithm \refgauss (\cref{algo:gauss}). 
If $A$ has linearly independent rows, it will return a solution $\vec{z}$ and produce intermediate values $(\piv_i)_{i ∈ [m]}$.
(These will be important only in the analysis of the algorithm.) If the rows of $A$ are linearly dependent, the algorithm will fail. 

\begin{figure}[htpb]
    \newcommand{\tikzmark}[1]{\tikz[overlay,remember picture] \node (#1) {};}
    \begin{algorithm}[H]
      \algo{{\normalfont\refgauss}$(A = (a_{ij})_{i∈[m],\,j∈[n+L-1]}, (s_i)_{i ∈ [m]}, \vec{b} ∈ \{0,1\}^m)$}{
        sort the rows of the system $(A,\vec{b})$ by $s_i$ (in time $\O(m)$)\;
        relabel such that $s₁ \le s₂ \le \dots \le s_m$\;
        $\piv₁, \piv₂, …,\piv_m ← 0$\;
        \For{$i = 1, …, m$}{
            \For{$j = s_i,…,s_i+L-1$\tikzmark{top}}{
                \If{$a_{ij} = 1$\tikzmark{bot}}{
                    $\piv_i ← j$\;
                    \For{$i'$ \With $i' > i ∧ s_{i'} ≤ \piv_i$}{
                        \If{$a_{i', \piv_i} = 1$}{
                            $a_{i'} ← a_{i'} ⊕ a_i$\ \  \tcp{row addition (= subtraction)}
                            $b_{i'} ← b_{i'} ⊕ b_i$\;
                        }
                    }
                    \Break
                }
            }
            \lIf{$\piv_i = 0$\ \ \tcp{row $i$ is 0}}{\Return \textsc{Failure}}
        }
        \tcp{back substitution:}
        $\vec{z} ← \vec{0}$\;
        \For{$i = m,…,1$}{
            $z_{\piv_i} ← \langle \vec{z}, a_i \rangle ⊕ b_i$\ \ \ \tcp{note: $a_{ij} = 0$ for $j$ outside of $\{s_i, …, s_i+L-1\}$}
        }
        \Return $\vec{z}$ \tcp{solution to $A\vec{z} = \vec{b}$}
      }%
    \caption{A simple Gaussian solver.}
    \label{algo:gauss}
    \end{algorithm}
    \begin{tikzpicture}[overlay, remember picture]
        \coordinate (topr) at ($(top)+(1cm,5pt)$);
        \coordinate (botr) at (topr |- bot);
        \draw [decoration={brace,amplitude=0.5em},decorate,thick,gray]
            (topr) --  (botr) node [align=left,pos=0.5, anchor=west] {\CommentSty{\ \ \ // \ search for leftmost $1$ in row $i$. Can be done}\\\CommentSty{\ \ \  // \ in time $\O(L/\log m)$ on a word RAM. }};
    \end{tikzpicture}
\end{figure}

Algorithm \refgauss starts by sorting the rows of the system $(A,\vec{b})$ by their starting positions $s_i$ in linear time, 
{e.g.} using counting sort~\cite[Chapter 8.2]{CLRS:Introduction-3rd:2009}. We suppress the resulting permutation in the notation, assuming $s₁ \le s₂ \le \dots \le s_m$. 
Rows are then processed sequentially. When row $i$ is treated, its leftmost $1$-entry is found, if possible, and the corresponding column index is called 
the \emph{pivot} $\piv_i$ of row $i$. Row additions are used to eliminate $1$-entries from column $\piv_i$ in subsequent rows. 
Note that this operation never produces nonzero entries outside of any row's original block, {i.e.}~for no row $i$ are there ever any $1$'s outside of the positions 
$\{s_i,…,s_i + L-1\}$. To see this, we argue inductively on the number of additions performed. Assume $i>1$ and row $i'$ with $i' < i$ is added to row $i$. 
By choice of $\piv_{i'}$ and the induction hypothesis, nonzero entries of row $i'$ can reside only in positions $\piv_{i'},…,s_{i'}+L-1$. 
Again by induction and since row $i$ contains a~$1$ in position $\piv_{i'}$, we have $s_{i} ≤ \piv_{i'}$; moreover we have $s_{i'} + L - 1 \le s_i + L - 1$, due to sorting. 
Thus, row $i'$ contains no $1$'s outside of the block of row $i$ and the row addition maintains the invariant.

If an all-zero row is encountered, the algorithm fails (and returns \textsc{Failure}). This happens if and only if the rows of $A$ are linearly dependent%
\footnote{Depending on $\vec{b}$, the system $A\vec{z} = \vec{b}$ may still be solvable. We will not pursue this.}. 
Otherwise we say that the algorithm succeeds. 
In this case a solution $\vec{z}$ to $A\vec{z} = \vec{b}$ is obtained by back-substitution. 

It is not hard to see that the expected running time of \refgauss is dominated by the expected cost of row additions.
% Die Anzahl der Ausführungen der Schleife von Z7-Z13 ist im Grunde die Summe der Höhen. Back-Substitution braucht $mL/log m$. Die Ausführungen der Schleife von Z10-Z12 ist erwartet zwei mal die Anzahl der tatsächlich ausgeführten Additionen.

The proof of the following statement, presented in the rest of this section, is the main technical contribution of this paper.

\begin{theorem}
    \label{mainProp}
    There is some $L = \O((\log m)/ε)$ such that a run of \refgauss on the random matrix $A=(a_{ij})_{i∈[m],\,j∈[n+L-1]}$ 
        and an arbitrary right hand side $\vec b ∈ \{0,1\}^{m}$ succeeds whp. The expected number of row additions
        is $\O(m/ε)$. Each row addition involves entries inside one block and takes time $\O(1/\varepsilon)$ on a word RAM. 
\end{theorem}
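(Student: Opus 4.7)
The plan is to follow the sketch in \cref{subsec:proof:ideas} and convert the behaviour of \refgauss into a queuing process whose parameters are known. The first step is a principle-of-deferred-decisions argument: when row $i$ is processed, the entries $a_{ij}$ with $s_i \le j < s_i+L$ outside columns already used as pivots by rows $i'<i$ are, conditional on the entire history of row operations so far, still uniform and independent fair bits. (The key observation is that row additions only overwrite entries in already-chosen pivot columns to $0$, and the ``fresh'' part of row $i$ is untouched.) We may therefore imagine the algorithm scanning the positions $s_i, s_i+1, \dots, s_i+L-1$ and, at each non-pivot position, flipping a fair coin whose outcome determines whether that position becomes $\piv_i$.

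This description coincides with the following variant of Robin Hood hashing, in a table of $n+L-1$ cells, processing keys in order of their (sorted) starting positions $s_1\le\dots\le s_m$: key $i$ probes cells $s_i, s_i+1, \dots$, skips every occupied cell, and at each empty cell flips a fair coin to decide whether to take it. The cell finally occupied by key $i$ is $\piv_i$, and the number of row additions caused by row $i$ equals the number of later keys $i'$ with $s_{i'}\le \piv_i$ whose block column $\piv_i$ happens to carry a $1$. The natural reformulation is to switch perspective from keys to cells and track the \emph{queue length} $Q_j$ at cell $j$, i.e.~the number of keys that arrive at or are forwarded to cell $j$. The plan is then to show: (i)~the expected sum $\sum_j Q_j$ is $\O(m/\varepsilon)$, which yields the stated bound on the expected number of row additions, and (ii)~$\max_j Q_j = \O((\log m)/\varepsilon)$ \whp, which lets us choose $L = \O((\log m)/\varepsilon)$ and guarantees that no row runs out of its block before finding a pivot, so \refgauss succeeds.

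To analyse $Q_j$ I would first Poissonise the arrivals: replace ``$m$ uniform starting positions in $[n]$'' by a sequence of independent $\mathrm{Po}(1-\varepsilon)$ arrivals at each cell, and use a standard coupling to show that the Poissonised queue dominates the original one up to negligible error terms. Under Poisson arrivals $(Q_j)_{j\ge 1}$ becomes a Markov chain. The residual annoyance is the chance that every key currently waiting at cell $j$ rejects it (all coin flips are $0$), so cell $j$ stays empty and the whole queue is passed on. I plan to bundle each ``batch of rejections followed by a single acceptance'' into one effective service event, yielding a standard discrete M/D/1 queue with arrival rate $1-\varepsilon$ and deterministic service one. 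For this classical queue both the expected stationary length and the tail bound on the maximum over a window of length~$n$ are well known (in particular $\E Q = \O(1/\varepsilon)$ and $\Pr[Q > t] \le \exp(-\Omega(\varepsilon t))$), from which the two targets above follow.

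The main obstacle is making the deferred-decisions/coupling chain watertight: one needs to argue rigorously that the coin-flipping Robin Hood process is probabilistically equivalent to \refgauss's pivot selection, that Poissonisation only harms the queue by a lower order term, and that the ``bundle rejections'' transformation genuinely produces an M/D/1 queue whose length stochastically upper-bounds $Q_j$. Once these steps are in place, the remaining claims are routine: the row-additions bound is $\E\sum_j Q_j \cdot \Pr[\text{1 in the relevant column}] = \O(m/\varepsilon)$; the whp success follows from the tail bound with $L = c(\log m)/\varepsilon$ for a sufficiently large constant $c$; and each row addition is an XOR of two $L$-bit blocks, which on a word RAM with $\Omega(\log m)$-bit words takes $\O(L/\log m) = \O(1/\varepsilon)$ time, as claimed.
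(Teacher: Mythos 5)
Your overall route is the same as the paper's: deferred decisions to treat pivot selection as coin-flipping linear probing (Robin Hood hashing), switching from the keys' perspective to the cells' perspective to track queue heights, Poissonisation to obtain a Markov chain, and an appeal to known M/D/1 results for expectation and tail. One step, however, would fail as described.

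The gap is in your treatment of the rejection rounds. You propose to ``bundle each batch of rejections followed by a single acceptance into one effective service event, yielding a standard discrete M/D/1 queue with arrival rate $1-\varepsilon$ and deterministic service one.'' This does not give the claimed queue: if a ``bundled'' step spans $k$ consecutive cells, the arrivals in that step are $\Po(k(1-\varepsilon'))$, not $\Po(1-\varepsilon')$, and $k$ itself depends on the current queue length, so both the Markov property and the stated parameters are lost. Put differently, the raw process has a \emph{state-dependent} service probability $1-2^{-Q_j}$ per cell, which is not deterministic, and re-indexing time cannot make it so without distorting the arrival stream. The paper resolves this by a stochastic-domination coupling rather than time bundling: the indicator $b_j$ that all waiting keys reject is dominated by an extra independent arrival $b'_j \sim \Po(\varepsilon'/2)$, using $\Pr[g_j > \log(4/\varepsilon')] = \varepsilon'/4 \le \Pr[b'_j \ge 1]$. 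This \emph{absorbs} rejections into a slightly larger arrival rate $1-\varepsilon'/2$ at the cost of an additive term $\log(4/\varepsilon')$ on the queue height (\cref{lem:M/B/1-to-M/D/1}); only then does one get a genuine M/D/1 chain. Some such domination argument is needed in place of the bundling. A smaller imprecision: the bound $\max_j Q_j = \O((\log m)/\varepsilon)$ whp does not by itself ``guarantee that no row runs out of its block''---even with few occupied cells, the remaining coin flips in a block could all come up $0$. You need the $\Theta(\log m)$ slack of \cref{lem:connection-gauss-to-lp}(iii): conditioned on $\max_j H_j \le L - 2\log m$, each key still has $\ge 2\log m$ free cells in its window, so per-key failure probability is $\le m^{-2}$ and a union bound finishes. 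With these two fixes your outline coincides with the paper's proof.
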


%We now examine a simpler process \refcfrh that mimics central aspects of \refgauss.

\subsection{Coin-Flipping Robin Hood Hashing}\label{subsec:robin:hood}
%****************************************************************************
Let $\{x_1,\dots,x_m\} ⊆ \U$ be some set of \emph{keys} to be stored in a hash table $T$.
%steht unten implizit in “unbounded”
% \footnote{We do not use the standard wrap-around regime of linear probing hashing.} 
Each key $x_i$ has a uniformly random \emph{hash value} $h_i ∈ [n]$. 
An (injective) placement of the keys in $T$ fulfils the \emph{linear probing} requirement 
if each $x_i$ is stored in a cell $T[\pos_i]$ with $\pos_i ≥ h_i$ and all cells $T[j]$ for $h_i ≤ j < \pos_i$ are non-empty. 
%The difference $d_i = \pos_i - h_i$ is called the \emph{displacement} of $x_i$.
In \emph{Robin Hood hashing} there is the additional requirement that $h_i > h_{i'}$ implies $\pos_i > \pos_{i'}$. 
Robin Hood hashing is interesting because it minimises the variance of the displacements $\pos_i-h_i$. It has been studied in detail in several papers%
~\cite{CelLarMun:1985:RobinHood,DevMorVio04,Janson2005,JV:Unified-Linear-Probing:2016,Viola:2005:ExactLinearProbing}. 

Given the hash values $(h_i)_{i\in[m]}$, a placement of the keys obeying the Robin Hood linear probing conditions can be obtained as 
follows: Insert the keys in the order of increasing hash values, by the usual linear probing insertion procedure, which
probes ({i.e.} inspects) cells $T[h_i], T[h_i+1], \dots$ until the first empty cell is found, and places $x_i$ in this cell. 
We consider a slightly “broken” variation of this method, which sometimes delays placements. In the placing procedure for $x_i$,
when an empty cell $T[j]$ is encountered, it is decided by flipping a fair coin
whether to place $x_i$ in cell $T[j]$ or move on to the next cell.
(No problem is caused by the fact that the resulting placement violates the Robin Hood requirement and even the linear probing requirement,
since the hash table is only used as a tool in our analysis.)
For this insertion method we assume we have an (idealised) unbounded array $T[1,2,\dots]$.   
The position in which key $x_i$ is placed is called $\pos_i$. At the end the algorithm itself checks whether
any of the displacements $\pos_i-h_i$ is larger than $L$, 
in which case it reports \textsc{Failure}.%
\footnote{The reason we postpone checking for \textsc{Failure} until the very end of the execution 
is that it is technically convenient to have the values $(\pos_i)_{i\in[m]}$ even if failure occurs.} 
\cref{algo:modified-rh} gives a precise description of this algorithm, which we term~\refcfrh.

\begin{figure}[htpb]
    \begin{algorithm}[H]
      \algo{\normalfont\refcfrh($\{x₁,…,x_m\} ⊆ \U$)}{
        sort $x₁,…,x_m$ by hash value $h₁,\dots,h_m$\;
        relabel such that $h₁ \leq \dots \leq h_m$\;
        $T ← [⊥,⊥,…]$ \tcp{empty array, ``$⊥$'' means ``undefined''}
        $\pos₁, \dots, \pos_m ← 0$\;
        \For{$i = 1, …, m$}{
            \For{$j = h_i,h_i+1,…$}{
                \If{$T[j] = ⊥ ∧ \mathrm{coinFlip}() = 1 \textsc{ (``heads'')}$}{
                    $\pos_i ← j$\;
                    $T[j] ← x_i$\;
                    \Break
                }
            }
        }
        \lIf{$∃i ∈ [m]: \pos_i - h_i ≥ L$}{\Return \textsc{Failure}}
        \Return $T$\;
      }
    \caption{The \emph{Coin-Flipping Robin Hood hashing} algorithm. Without the “coinFlip() = 1” condition, it would compute a Robin Hood placement with maximum displacement $L$, if one exists.}
    \label{algo:modified-rh}
    \end{algorithm}
\end{figure}

\subsection{Connection between \refgauss and \refcfrh}\label{subsec:relation:gauss:cfrh}
%******************************************************************

We now establish a close connection between the behaviour of algorithms \refgauss and \refcfrh,
thus reducing the analysis of \refgauss to that of \refcfrh.
The algorithms have been formulated in such a way that some structural similarity is immediate.   
A run of \refgauss on a matrix with random starting positions $(s_i)_{i ∈ [m]}$ and random entries yields a sequence of pivots $(\piv_i)_{i ∈ [m]}$; a run of \refcfrh on a key set with random hash values $(h_i)_{i ∈[m]}$ performing random coin flips yields a sequence of positions $(\pos_i)_{i ∈ [m]}$.
We will see that the distributions of $(\piv_i)_{i ∈ [m]}$ and $(\pos_i)_{i ∈ [m]}$ are essentially the same and that moreover
two not so obvious parameters of the two random processes are closely connected. 
For this, we will show that outside the \textsc{Failure} events we can use the probability space underlying 
algorithm \refgauss to describe the behaviour of algorithm \refcfrh.   
This yields a coupling of the involved random processes. 

%We now check that $(s_i,\pos_i)_{i∈[m]}$ and $(h_i,\piv_i)_{i∈[m]}$ have the same distribution except for an imperfect correspondance between failure cases. Formally, we point out a coupling between $(s_i,\pos_i)_{i∈[m]}$ and $(h_i,\piv_i)_{i∈[m]}$, {i.e.} we embed the sequences in a common probability space, such that they coincide unless both algorithms fail.

The first step is to identify $s_i = h_i$ for $i ∈ [m]$ (both sequences are assumed to be sorted and then renamed).
The connection between $\pos_i$ and $\piv_i$ is achieved by connecting the coin flips of \refcfrh to 
certain events in applying \refgauss to matrix $A$. We construct this correspondence by induction on $i$.
Assume rows $1,\dots,i-1$ have been treated, $x_1,\dots,x_{i-1}$ have been placed, and $\piv_{i'} = \pos_{i'}$ for all $1 ≤ i' < i$.

Now row $a_i$ (transformed by previous row additions) is treated. 
It contains a 0 in columns that were previously chosen as pivots, so possible candidates for $\piv_i$ are only indices from 
$J_i := \{s_i,…,s_i+L-1\} \setminus \{\piv₁,…,\piv_{i-1}\}$. For each $j ∈ J_i$, the initial value of $a_{ij}$ was a random bit.
The bits added to $a_{ij}$ in rounds $1,\dots,i-1$ are determined by the original entries of rows $1,\dots,i-1$ alone.
We order the entries of $J_i$ as  $j^{(1)} < j^{(2)} \dots < j^{(|J_i|)}$. 
Then, conditioned on all random choices in rows $1,\dots,i-1$ of $A$, 
the current values $a_{i,j^{(1)}},\dots,a_{i,j^{(k)}}$ still form a sequence of fully random bits.
We use these random bits to run round $i$ of \refcfrh, in which $x_i$ is placed. 
Since each cell can only hold one key, and by excluding runs where finally \textsc{failure} is declared, we may focus on the empty cells with indices in 
$\{h_i,…,h_i+L-1\} \setminus \{\pos₁,…,\pos_{i-1}\}  = \{s_1,\dots,s_i+L-1\} \setminus \{\piv₁,…,\piv_{i-1}\}=J_i$. 
We use (the current value) $a_{ij}$ as the value of the coin flip for cell $j$, for $j=j^{(1)}, j^{(2)}, \dots, j^{(|J_i|)}$.
The minimal $j$ in this sequence (if any) with $a_{ij}=1$ equals $\piv_i$ and $\pos_i$. If all these bits are 0, algorithm \refgauss will fail immediately,
and 
%we do not care what happens in algorithm \refcfrh, since   % some more coins are flipped, but the values do not matter ...
key $x_i$ will be placed in a cell $T[j]$ with $j \ge h_i+L$, so \refcfrh will 
eventually fail as well. 

Thus we have established that the random variables needed to run algorithm \refcfrh (outside of \textsc{Failure})
can be taken to belong to the probability space defined by $(s_i)_{i\in[m]}$ and the entries in the blocks of $A$ for algorithm \refgauss,
so that  (outside of \textsc{Failure}) the random variables $\pos_i$ and $\piv_i$ are the same. 
In the following lemma we state this connection as Claim {(i)}. 
In addition, we consider other random variables central for the analysis to follow. 
First, we define the \emph{height} of position $j ∈ [n +L -1]$ in the hash table as
\[ H_j := \#\{ i ∈ [m] \mid h_i ≤ j < \pos_i\}.\]
This is the number of keys probing table cell $j$ without being placed in it,
either because the cell is occupied or because it is rejected by the coin flip. 
%It will turn out that it is helpful to formulate algorithm \refcfrh and the values $\pos_i$ and $H_j$ without restricting $\pos_i-h_i$.
Claim {(ii)} in the next lemma shows that $\sum_{j\in[n+L-1]}H_j$ essentially determines the running time of \refgauss,
so that we can focus on bounding $(H_j)_{j ∈ ℕ}$ from here on.
Further, with Claim {(iii)}, we get a handle on the question how large we have to choose $L$ in order to keep the failure probability small.

\begin{lemma}
    \label{lem:connection-gauss-to-lp}
    With the coupling just described, we get
    \begin{enumerate}[{\upshape(i)}]\setlength\itemsep{0em}
            • \label{item:success-identified} \refgauss succeeds iff \refcfrh succeeds. On success we have $\piv_i = \pos_i $ for all $i ∈ [m]$.
            • \label{item:number-of-additions} A successful run of \refgauss performs at most $\sum_{j ∈ [n+L-1]} H_j$ row additions.
            • \label{item:success-if-heights-small} Conditioned on the event $\max_{j ∈ [n]} H_j ≤ L - 2\log m$, the algorithms succeed \whp.
    \end{enumerate}
\end{lemma}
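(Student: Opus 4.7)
My plan is to treat the coupling built just before the lemma as the only substantive ingredient; each of the three claims then reduces to an elementary combinatorial or probabilistic observation on the coupled pair of runs. For (i), I would argue inductively on $i$. Assuming $\piv_{i'} = \pos_{i'}$ for all $i' < i$, the positions \refgauss scans when processing row $i$ are exactly $J_i = \{s_i,\dots,s_i+L-1\} \setminus \{\piv_1,\dots,\piv_{i-1}\}$, which under the coupling are precisely the empty cells in the block that $x_i$ probes, and $a_{i,j^{(1)}}, \dots, a_{i,j^{(|J_i|)}}$ serve as $x_i$'s coin flips. Hence the first $j \in J_i$ with $a_{ij} = 1$ is simultaneously $\piv_i$ and $\pos_i$; if no such $j$ exists then \refgauss reports \textsc{Failure} at row $i$ while $x_i$ is forced past cell $h_i+L-1$, eventually triggering \textsc{Failure} of \refcfrh. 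Conversely, any \refcfrh failure comes from some displacement $\ge L$, which forces all bits of $J_i$ to be $0$, so \refgauss fails at the same row.

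For (ii), I build an injective charging from row additions into the pairs $(i,j)$ with $h_i \le j < \pos_i$ counted by $\sum_j H_j$. A row addition is determined by a triple $(i^*, i', j^*)$ with $i^* < i'$, $j^* = \piv_{i^*}$, $s_{i'} \le j^*$, and $a_{i', j^*} = 1$ at the moment row $i^*$ is processed. Since pivot columns of distinct rows are distinct, under the coupling either $\piv_{i'} > j^*$ (Case A) or $\piv_{i'} < j^*$ (Case B). In Case A I charge the addition to $(i', j^*)$, which is a valid slot because $h_{i'} = s_{i'} \le j^* < \pos_{i'}$; in Case B I charge it to $(i^*, \piv_{i'})$, valid because $h_{i^*} \le s_{i'} \le \piv_{i'} < j^* = \pos_{i^*}$ (using sortedness of the $s_i$). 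The two kinds of charges are disjoint, since for any slot $(i,j)$ the unique owner of pivot $j$ must either precede $i$ (Case A) or follow $i$ (Case B); within each case the slot $(i,j)$ reconstructs the addition uniquely. Hence the total number of row additions is at most $\sum_j H_j$.

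For (iii), I decompose each displacement as $\pos_i - h_i = O_i + T_i$, where $O_i$ is the number of occupied cells $x_i$ probes and $T_i$ is the number of tails coin flips $x_i$ makes before placement. Every $k$ contributing to $O_i$ satisfies $k < i$, $h_k \le h_i$, and $\pos_k \ge h_i$, so $O_i \le H_{h_i} + 1 \le \max_j H_j + 1$ (the $+1$ absorbs any key placed exactly at cell $h_i$). On the event $\max_j H_j \le L - 2\log m$, a failure ($\pos_i - h_i \ge L$) therefore forces $T_i \ge 2\log m - 1$; since $x_i$'s coin flips at the empty cells it probes are independent fair Bernoullis, $\Pr[T_i \ge 2\log m - 1] \le 2m^{-2}$, and a union bound over the $m$ keys yields $\Pr[\text{failure} \cap \{\max_j H_j \le L - 2\log m\}] \le 2/m$, which together with (i) gives the desired \whp statement. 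I expect the most delicate step to be verifying disjointness and injectivity of the Case A/Case B charging in (ii) without missing corner cases; once that is pinned down, the remaining pieces are routine.
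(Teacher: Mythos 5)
Your proposal is correct and follows essentially the same approach as the paper: part (i) restates the coupling argument that the paper establishes in the discussion preceding the lemma, part (ii) is the same case split and injective charging into $\mathsf{Displ}$ (the paper recovers the addition from the slot $(i,j)$ in a single step via the injective map $i \mapsto \pos_i$ rather than splitting into Cases A/B, but the content is identical), and part (iii) rests on the same observation that conditioning bounds the occupied cells in $x_i$'s probe window, leaving at least $\approx 2\log m$ free cells whose coin flips each independently fail with probability $1/2$. One small point in your favour: your explicit ``$+1$'' in $O_i \le H_{h_i} + 1$ correctly accounts for a key possibly placed at cell $h_i$ itself (which $H_{h_i}$ does not count), a detail the paper's phrase ``at most $H_j$ keys have probed cell $j$'' quietly glosses over; the factor-of-two difference in the resulting bound is absorbed by $\O(1/m)$.
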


\begin{proof}
                    (ii) 
                \def\adds{\mathsf{Add}}%
        \def\disps{\mathsf{Displ}}%
        (Note that a similar statement with a different proof can be found in~\cite[Lemma 2.1]{Janson2008}.) 
                Consider the sets $\adds ≔ \{(i,i') ∈ [m]² \mid \text{\refgauss adds row $i$ to row $i'$}\}$ and $\disps ≔ \{(i,j) ∈ [m] × [n+L-1] \mid h_i ≤ j < \pos_i\}$. 
                Since $H_j$ simply counts the pairs $(i,j)\in\disps$ with $i\in[m]$, we have $|\disps| = \sum_{j ∈ [n+L-1]} H_j$. 
                To prove the claim we exhibit an injection from $\adds$ into $\disps$.
        
        Assume $(i,i') ∈ \adds$. If $\pos_i < \pos_{i'}$, we map $(i,i')$ to $(i',\pos_i)$. This is indeed an element of $\disps$,
                since $h_{i'} ≤ \piv_i = \pos_i < \pos_{i'}$ (if $\piv_i$ were smaller than $s_{i'}$, row $i$ would not be added to row $i'$).
      On the other hand, if $\pos_i > \pos_{i'}$, we map $(i,i')$ to $(i,\pos_{i'})$. This is in $\disps$ since $h_i =s_i ≤ s_i' ≤ \pos_{i'} < \pos_{i}$ 
                (recall that rows are sorted by starting position).
        
        The mapping is injective since from the image of $(i,i') ∈ \adds$ we can recover the set $\{i,i'\}$ with the help of 
                the injective mapping $i \mapsto \pos_i$, $i∈[m]$. 
                The fact that $i < i'$ fixes the ordering in the pair. 
        % •

                (iii) In \refcfrh, for an arbitrary $i ∈ [m]$ consider the state before key $x_i$ probes its first position $j := h_i$. 
                Any previous key $x_{i'}$ with $i' < i$ has a hash value $h_{i'} ≤ h_i$. Hence it either was inserted in a cell $j' < j$ or it has probed cell $j$. 
                Since at most $H_j$ keys have probed cell $j$, at most $H_j$ positions in $T[j,\dots,j+L-1]$ are occupied and at least $2\log m$ are free. The probability that $x_i$ is not placed in this region is therefore at most $2^{-2\log m} = m^{-2}$. By the union bound we obtain a failure probability of $\O(1/m)$.\qedhere
    %\end{enumerate}
\end{proof}

\newcommand{\Em}{E_{\ge m}}
\newcommand{\Emax}{E_{\text{max}\,Z}}

\subsection{Bounding Heights in \refcfrh by a Markov Chain}\label{subsec:heights:to:markov}
%**********************************************************************************

\def\Geom{\mathrm{Geom}}

\cref{lem:connection-gauss-to-lp} tells us that we must analyse the heights in the hashing process \refcfrh. In this subsection,
we use ``Poissonisation'' of the hashing positions to majorise the heights in \refcfrh by a Markov chain, {i.e.} a process that is oblivous to the past, 
apart from the current height. Poissonisation is a common step in the analysis of linear probing hashing, see~{e.g.}~\cite{Viola:2005:ExactLinearProbing}.
Further, we wish to replace randomized placement by deterministic placement: Whenever a key is available for a position, 
one is put there (instead of flipping coins for all available keys). 
By this, the heights may decrease, but only by a bounded amount whp. The details of these steps are given in this subsection. 

In analysing \refcfrh (without regard for the event \textsc{Failure}), it is inconvenient that the starting positions $h_i$
are determined by random choices with subsequent sorting. Position $j$ is hit by a number of keys
given by a binomial distribution $\Bin(m,\frac1n)$ with expectation $\frac{m}{n}=1-ε$,
but there are dependencies. 
We approximate this situation by 
``Poissonisation''~\cite[Sect. 5.4]{MU:Probability:2005}.
Here this means that we assume that cell $j\in[n]$ is hit by $k_j$ keys, independently for $j=1,\dots,m$, where $k_j \sim \Po(1-ε')$
is Poisson distributed, for $ε'=ε/2$. 
Then the total number $m' = \sum_{j ∈ [n]}k_j$ of keys is distributed as $m' \sim \Po((1-ε')n)$.
Given $k_1,\dots,k_n$, we can imagine we have $m'$ keys with nondecreasing hash values $(h_i)_{i ∈ [m']}$, and 
we can apply algorithm \refcfrh to obtain key positions $(\pos'_i)_{i ∈ [m']}$ in $\{1,2,\dots\}$ and cell heights $(H'_j)_{j\ge1}$.

Conveniently, with Poissonisation, the heights $(H'_j)_{j ∈ [n]}$ turn out to form a Markov chain. 
This can be seen as follows. Recall that $H'_{j-1}$ is the number of keys probing cell $j-1$ without being placed there. 
Hence the number of keys probing cell $j$ is $H'_{j-1} + k_j$. 
One of these keys will be placed in cell $j$, unless $H'_{j-1}+k_j$ coin flips all yield 0, 
so if $g_j \sim \Geom(\frac 12)$ is a random variable with geometric distribution with parameter $\frac12$
(number of fair coin flips needed until the first 1 appears) and $b_j$ is the indicator function $\mathds{1}_{\{g_j \,>\, H'_{j-1}+k_j\}}$, we have $H'_j = H'_{j-1}+k_j-1+b_j$. 
(Note that the case $H'_{j-1}+k_j=0$ is treated correctly by this description. Conditioned on $H'_{j-1}+k_j$, the
value $b_j$ is a Bernoulli variable.)
The Markov property holds since $H'_j$ depends only on $H'_{j-1}$ and the two “fresh” random variables $k_j$ and $g_j$.

The following lemma allows us to shift our attention from $(H_j)$ to $(H'_j)$.

\begin{lemma}
    \label{lem:poissonise} Let $m=(1-ε)n$ and $m'\sim\Po((1-ε')m)$ for $ε' = ε/2$.
		There is a coupling between an ordinary run of \refcfrh (with $m$, $n$ and $H_j$) and a Poissonised run (with $m'$, $n$ and $H_j'$)
		such that conditioned on the high probability event $\Em = \{m' \ge m\}$ we have $H'_j ≥ H_j$ for all $j ∈ [n+L-1]$.
\end{lemma}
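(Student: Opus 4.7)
The plan is to generate the Poissonised run first and recover the original run from it by uniformly random subsampling, then establish the height comparison pathwise using the Markov chain recursion from the previous subsection.

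I would start by verifying that $\Em$ is a high-probability event. Since $\E[m'] = (1-\varepsilon')n = (1-\varepsilon/2)n$ exceeds $m = (1-\varepsilon)n$ by $\varepsilon n/2 = \Theta(n)$, a standard Poisson tail bound gives $\Pr[m' < m] \le \exp{-\Omega(\varepsilon^2 n)}$, well within $\O(1/m)$.

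For the coupling itself, I would first sample the Poissonised arrivals $k_j \sim \Po(1-\varepsilon')$ independently for $j \in [n]$ (and set $k_j := 0$ for $n < j \le n+L-1$) together with independent geometric random variables $g_j \sim \Geom(\tfrac12)$, which record the index of the first ``heads'' among the coin flips at cell $j$. These alone determine the Poissonised heights via the recursion $H'_j = H'_{j-1} + k_j - 1 + \mathds{1}[g_j > H'_{j-1}+k_j]$ with $H'_0 = 0$. On the event $\Em$, I would pick a uniformly random size-$m$ subset of the $m'$ Poissonised keys; writing $A_j \le k_j$ for the number of surviving arrivals at cell $j$, I would set $H_j := H_{j-1} + A_j - 1 + \mathds{1}[g_j > H_{j-1}+A_j]$ using the \emph{same} $g_j$. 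A routine check confirms that $(A_j,g_j)_j$ has the joint distribution of an ordinary run of \refcfrh with $m$ keys: conditioning a homogeneous Poisson process on its count yields a uniform multinomial, so a uniform size-$m$ subsample is iid uniform on $[n]$, and the coin flips remain iid fair by construction.

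The heart of the argument is the pathwise monotonicity $H'_j \ge H_j$ for every $j$, which I would prove by induction on $j$ starting from $H'_0 = H_0 = 0$. Write $N_j = H_{j-1}+A_j$ and $N'_j = H'_{j-1}+k_j$; by the inductive hypothesis and $A_j \le k_j$, one has $N'_j \ge N_j$. A three-way case split on the shared $g_j$ --- namely $g_j \le N_j$ (both queues dispatch a key), $N_j < g_j \le N'_j$ (only the Poissonised queue dispatches), and $g_j > N'_j$ (neither dispatches) --- handles the inductive step. The only delicate case is the middle one, where the original queue sees no departure but the Poissonised one does; here $g_j > N_j$ forces $N'_j \ge N_j + 1$, which preserves the inequality $N'_j - 1 \ge N_j$. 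I expect this middle case, together with the bookkeeping of the shared $g_j$, to be the only real subtlety; everything else is standard Poissonisation.
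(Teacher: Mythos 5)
Your proof is correct, but it takes a genuinely different route from the paper's. The paper's coupling works at the level of individual keys: it realises the $m'$ Poissonised keys as the $m$ ordinary keys plus $m'-m$ extra ones, reuses the same per-key-per-cell coin flips for the ordinary keys in both runs, and argues that every ordinary key's displacement can only grow when extra keys are present (so each key contributes to at least as many heights $H'_j$ as $H_j$, and the extra keys add more). The paper states this displacement monotonicity as ``it is clear'', but it actually rests on a nontrivial inductive invariant (the set of occupied cells in the ordinary run stays contained in the set of occupied cells in the Poissonised run). Your coupling works instead at the level of cells: you encapsulate all coin flips at cell $j$ into a single geometric variable $g_j$ shared by both runs, realise $A_j\le k_j$ by subsampling the Poisson arrivals, and then prove $H'_j\ge H_j$ by an explicit three-case induction on the common recursion $H_j = H_{j-1}+A_j-1+\mathds{1}[g_j > H_{j-1}+A_j]$. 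This buys you a fully spelled-out monotonicity argument with no hidden combinatorial claim, and it meshes more directly with the Markov-chain recursion used in the subsequent lemma (where the same $g_j$ device reappears); the paper's version is arguably more intuitive once one accepts the monotonicity of Robin Hood hashing under adding keys, but leaves that step implicit. One small caveat worth flagging: the shared-$g_j$ device is legitimate precisely because both height processes are measurable functions of the arrival counts and of, for each cell, only the \emph{prefix} of coin flips it actually consumes, so replacing per-key coins by a per-cell i.i.d.\ $\Geom(\frac12)$ source does not change either marginal law --- you should make that reduction explicit rather than fold it into ``iid fair by construction''. Also note the lemma's displayed parameter $m'\sim\Po((1-ε')m)$ is a typo for $\Po((1-ε')n)$, consistent with the surrounding text and with your computation of $\E[m']$.
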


\begin{proof}
    Because $ε$ and $ε'  = ε/2$ are constants, the event $\Em$ has indeed high probability, as can be seen by well-known concentration bounds for the 
		Poisson distribution ({e.g.}~\cite[Th. 5.4]{MU:Probability:2005}). 
		For $m_0 \ge m$ fixed the distribution of the number of hits in the cells in $T[1,\dots,n]$ conditioned on $\{m'=m_0\}$
		is the same as what we get by throwing $m_0$ balls randomly into $n$ bins~\cite[Th. 5.6]{MU:Probability:2005}. Thus,  
		we may assume the Poissonised run has to deal with the $m$ keys of the ordinary run 
		plus $m'-m$ additional keys with random hash values in $[n]$. We apply algorithm \refcfrh to both inputs.
		After sorting, the new keys are inserted in some interleaved way with the ordinary keys.
		Now if one of the ordinary keys $x$ probes an empty cell $T[j]$, we use the same coin flip in both runs to decide whether to place it there;
		for the probing of the additional keys we use new, independent coin flips. 
		With this coupling it is clear that for all ordinary keys $x$ the displacement ``(position of $x$) $-$ (hash value of $x$)''
		in the Poissonized run is at least as big as in the ordinary run. As the additional keys can only increase heights, $H'_j ≥ H_j$ follows.
\end{proof}
As a further simplification, we eliminate the geometrically distributed variable $g_j$ and the derived variable $b_j$ in the Markov chain $(H'_j)_{j \ge 0}$. 
For this, let $(X_j)_{j \ge 0}$ be the Markov chain defined as
\begin{equation}
X₀ := 0 \quad \text{ and } \quad X_{j} := \max(0,X_{j-1} + d_j - 1) \quad \text{for } j ≥ 1, 
\label{eq:500}
\end{equation}
where $d_j \sim \Po(1-ε'/2)$ are independent random variables.

\begin{lemma}
    \label{lem:M/B/1-to-M/D/1}
    There is a coupling between $(X_j)_{j \ge 0}$ and $(H'_j)_{j \ge 0}$ such that $X_j + \log(4/ε') ≥ H'_j$ for all $j ∈ [n+L-1]$.
\end{lemma}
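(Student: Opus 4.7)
The plan is to construct the coupling one step at a time and then extend it globally via the chaining principle of \cref{subsec:techniques}. Both chains will share the arrivals $k_j$: I write $d_j = k_j + k'_j$ with $k'_j \sim \Po(\varepsilon'/2)$ independent of $k_j$, so that $d_j \sim \Po(1-\varepsilon'/2)$ by Poisson additivity. The extra arrivals $k'_j$ in the $X$-chain will then be used to absorb the pesky bonus term $b_j$ appearing in the $H'$-chain.

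Concretely, let $K \coloneqq H'_{j-1} + k_j$ and $c \coloneqq \log(4/\varepsilon')$ with logarithm to base $2$. Conditional on $K$ we have $\Pr[b_j = 1] = 2^{-K}$ while $\Pr[k'_j \ge 1] = 1 - \mathrm{e}^{-\varepsilon'/2}$. For $K \ge c$ the first probability is at most $\varepsilon'/4$, which a short calculation shows is dominated by the second for $\varepsilon' \in (0,1)$. Hence for $K \ge c$ I can couple $(k'_j, b_j)$ conditional on $K$ so that $\{b_j = 1\} \subseteq \{k'_j \ge 1\}$ (in particular $k'_j \ge b_j$) while preserving the marginals $\Po(\varepsilon'/2)$ and $\Ber(2^{-K})$; for $K < c$ I simply take $k'_j$ independent of $b_j$. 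Marginalising over $g_j$ keeps $k'_j$ equal in distribution to $\Po(\varepsilon'/2)$ independently of the past, so $(X_j)$ remains the intended Markov chain.

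With this coupling in hand, the lemma reduces to proving $H'_j - X_j \le c$ by induction on $j$. The base $j = 0$ is trivial. For the inductive step I split on $K$. If $K \ge c$ then $k'_j \ge b_j$, and using $X_j \ge X_{j-1} + d_j - 1$ yields
\[ H'_j - X_j \le (H'_{j-1} + k_j - 1 + b_j) - (X_{j-1} + k_j + k'_j - 1) = (H'_{j-1} - X_{j-1}) + (b_j - k'_j) \le c \]
by the inductive hypothesis. If $K < c$, the recurrence gives $H'_j = K - 1 + b_j \le K < c$ directly, and since $X_j \ge 0$ the bound holds as well.

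The hard part is pinning down the right threshold $c$: the coupling requires $\Pr[b_j = 1] \le \Pr[k'_j \ge 1]$ above the threshold, which balances the exponential tail of $b_j$ in $K$ against the fixed rate gap $\varepsilon'/2$ between the two Poisson processes. This forces $c$ to be logarithmic in $1/\varepsilon'$, and a short check of the elementary inequality $\varepsilon'/4 \le 1 - \mathrm{e}^{-\varepsilon'/2}$ for $\varepsilon' \in (0,1)$ confirms that $c = \log(4/\varepsilon')$ suffices. A secondary point to verify along the way is that coupling $k'_j$ to $b_j$ conditional on $K$ does not disturb the unconditional distribution of $(X_j)$; this is automatic from the fact that the coupling preserves the correct marginals at each step.
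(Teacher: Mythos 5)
Your proof is correct and follows essentially the same approach as the paper: you decompose $d_j = k_j + k'_j$ with $k'_j \sim \Po(\varepsilon'/2)$, use the inequality $\varepsilon'/4 \le 1 - \e^{-\varepsilon'/2}$ to build a stochastic domination, and run an induction that splits on whether $H'_{j-1}+k_j$ exceeds $\log(4/\varepsilon')$. The only cosmetic difference is that the paper couples $g_j$ directly to the extra Poisson variable (via $g_j > \log(4/\varepsilon') \Rightarrow b'_j \ge 1$) unconditionally, whereas you couple the derived indicator $b_j$ to $k'_j$ conditionally on $K$, which requires the extra (correct) remark that the conditional marginal of $k'_j$ stays $\Po(\varepsilon'/2)$ so the $X$-chain is unperturbed.
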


\begin{proof}
    Assume wlog that $\log(1/ε')$ is an integer. Let $b'_j \sim \Po(ε'/2)$ be a random variable 
		on the same probability space as $g_j$ such that $g_j > \log(4/ε')$ implies $b_j' ≥ 1$. This is possible because
    \[ \Pr[g_j > \log(4/ε')] = 2^{-\log(4/ε')} = ε'/4 ≤ 1-e^{-ε'/2} = \Pr[b_j' ≥ 1].\]
    We then define $d_j ≔ k_j + b'_j$ which gives $d_j \sim \Po(1-ε'/2)$.
		Proceeding by induction, and using~\eqref{eq:500}, we can define $(X_j)_{j \ge 0}$ and $(H_j')_{j\ge 0}$ on a common probability space.
		Then we check $X_j + \log(4/ε') ≥ H'_j$, also by induction:
		In the case $H'_{j-1} + k_j ≤ \log(4/ε')$ we simply get
    \[ X_j + \log(4/ε') ≥ \log(4/ε') ≥ H'_{j-1} + k_j ≥ H'_{j-1} + k_j + b_j - 1 = H'_j.\]
    Otherwise we can use the inequality $b_j = \mathds{1}_{\{g_j \,>\, H'_{j-1} + k_j\}} ≤ \mathds{1}_{\{g_j \,>\, \log(4/ε')\}} ≤ b_j'$ to obtain
    \begin{align*} X_j + \log(4/ε') &≥ X_{j-1} + d_j - 1 + \log(4/ε') \stackrel{\text{(Ind.Hyp.)}}{≥} H'_{j-1} + d_j - 1 \\  
		&= H'_{j-1} + k_j + b_j' - 1 ≥ H'_{j-1} + k_j + b_j - 1 = H'_j. \qedhere
		\end{align*}
\end{proof}

\subsection{Enter Queuing Theory}\label{subsec:queuing}
%**************************************************

It turns out that, in essence, the behaviour of the Markov chain $(X_j)_{j \ge 0}$ has been studied in the literature under the name ``M/D/1 queue'', which is Kendall notation~\cite{Kendall:QueuingTheory:1953} for queues with “\textbf{M}arkovian arrivals, \textbf{D}eterministic service times and \textbf{1} server''.
We will exploit what is known about this simple queuing situation in order to finish our analysis. 

Formally, an M/D/1 queue is a Markov process $(Z_t)_{t ∈ ℝ_{≥0}}$ in continuous time and discrete space $ℕ₀=\{0,1,2,\dots\}$. 
The random variable $Z_t$ is usually interpreted as the number of \emph{customers} waiting in a FIFO queue at time $t ∈ ℝ_{≥0}$. 
Initially the queue is empty ($Z₀ = 0$). Customers arrive independently, {i.e.} arrivals are determined by a Poisson process with a rate we set to $ρ = 1-ε'/2$
(which implies that the number of customers arriving in any fixed time interval of length 1 is $\Po(\rho)$-distributed).
The \emph{server} requires one time unit to process a customer which means that if $t ∈ ℝ_{≥ 0}$ is the time of the first arrival, 
then customers will leave the queue at times $t+1,t+2,…$ until the queue is empty again.

Now consider the discretisation $(Z_j)_{j ∈ ℕ₀}$ of the M/D/1 queue. For $j ≥ 1$, the number $d_j$ of arrivals in between two observations $Z_{j-1}$ and $Z_j$ 
has distribution $d_j \sim \Po(ρ)$, and one customer was served in the meantime if and only if $Z_{j-1} > 0$. We can therefore write
\[ Z_j = \begin{cases}
    d_j & \text{ if $Z_{j-1} = 0$},\\
    Z_{j-1} + d_j - 1 & \text{ if $Z_{j-1} > 0$}.
\end{cases}\]
By reusing the variables $(d_j)_{j\ge1}$ that previously occurred in the definition of $(X_j)_{j\ge0}$, 
we already established a coupling between the processes $(X_j)_{j \ge 0}$ and $(Z_j)_{j \ge 0}$. A simple induction suffices to show
\begin{equation}
    X_j = \max(0,Z_j - 1).\label{eq:X-to-Z}\text{ for all }j\ge0.
\end{equation}
Intuitively, the server in the $X$-process is ahead by one customer because customers are processed at integer times “just in time for the observation”.

The following results are known in queuing theory:

\begin{fact}
    \label{fact:M/D/1-queues}
    \begin{enumerate}[{\upshape(i)}]\setlength\itemsep{0em}
        • \label{item:exp-Z} The average number of customers in the $Z$-queue at time $t ∈ ℝ_{≥0}$ is
        \[ \E[Z_t] ≤ \lim_{\tau → ∞} \E[Z_\tau] = ρ + \tfrac 12 \left( \frac{ρ²}{1-ρ}\right) = Θ(1/ε). \]
        (Precise values are known even for general arrival distributions, see \cite[Chapter 5.4]{Cooper:QueingTheory-2nd;1972}.)
        • \label{item:tail-Z}
        % The \emph{sojourn time} is the time a customer spends in the queue. In M/D/1 queues this is equal to the length of the queue when the customer arrives.
        \cite[Prop 3.4]{EZB:M/D/1-tails:2006} We have the following tail bound for the event $\{Z_t > k\}$ for any $k ∈ ℕ$:
        \[ \Pr[Z_t > k] ≤ \lim_{\tau → ∞} \Pr[Z_\tau > k] = \e^{-k · \Theta(ε)}\text{, for all }t\ge0.\]
    \end{enumerate}
\end{fact}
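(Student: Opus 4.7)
The plan is to prove both parts by first establishing that the process $(Z_t)_{t \ge 0}$ started from the empty queue $Z_0 = 0$ is stochastically increasing in $t$, so that $\E[Z_t]$ and $\Pr[Z_t > k]$ grow monotonically with $t$ and converge to the corresponding values under a stationary distribution $Z_\infty$. Monotonicity follows from a standard coupling argument: the workload at time $t+s$ starting from state $0$ coincides in distribution with the workload at time $t$ of a copy started in state $Z_s \ge 0$, and workloads are pathwise monotone in the initial state when arrivals and service epochs are coupled identically. Existence of $Z_\infty$ is classical for $\rho < 1$: the embedded chain $(Z_j)_{j\ge 0}$ is irreducible, aperiodic, and positive recurrent via the Lyapunov function $V(k) = k$. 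This step reduces both claims to computing the mean and the tail under the stationary law.

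For part~(i), I would invoke the Pollaczek--Khinchine mean formula for M/G/1 queues, which asserts $\E[Z_\infty] = \rho + \frac{\rho^2(1+C_s^2)}{2(1-\rho)}$ in terms of the squared coefficient of variation $C_s^2$ of the service-time distribution. Deterministic service has $C_s^2 = 0$ and yields exactly $\rho + \frac{1}{2}\cdot\frac{\rho^2}{1-\rho}$. Substituting $\rho = 1 - \varepsilon'/2$ gives $1-\rho = \Theta(\varepsilon)$ and hence $\E[Z_\infty] = \Theta(1/\varepsilon)$. A self-contained derivation proceeds via the generating function $P(z) = \E[z^{Z_\infty}]$ of the embedded chain: the stationarity relation $Z_j \stackrel{d}{=} \max(0, Z_{j-1}+d_j-1)$ together with $\Pr[Z_\infty=0] = 1-\rho$ (the fraction of idle time, by Little's law or by direct balance) yields the functional equation $P(z) = (1-\rho)\,\frac{(z-1)\,e^{\rho(z-1)}}{z - e^{\rho(z-1)}}$, and differentiating at $z=1$ using L'H\^opital produces the closed form.

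For part~(ii), I would use an exponential-martingale argument on the embedded chain. Define $\phi > 1$ as the unique root of $\E[\phi^{d_1-1}] = 1$, i.e. of $\phi \cdot e^{-\rho(\phi-1)} = 1$, equivalently $\rho(\phi-1) = \log\phi$. A Taylor expansion around $\phi = 1$ with $\rho = 1 - \Theta(\varepsilon)$ yields $\phi - 1 = \Theta(\varepsilon)$. The process $M_j := \phi^{Z_j}$ is a supermartingale on $\{Z_{j-1} \ge 1\}$, and a standard regeneration argument between successive visits of the chain to state $0$ gives $\Pr[Z_\infty \ge k] = O(\phi^{-k}) = e^{-k\cdot\Theta(\varepsilon)}$. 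Equivalently, the closed form of $P(z)$ from part~(i) extends analytically beyond the unit disk up to the singularity $z = \phi$, and its radius of convergence determines the geometric tail rate. The main obstacle is bookkeeping the constants carefully enough to confirm that the exponential rate really scales linearly in $\varepsilon$ as $\varepsilon \to 0$ (both the numerator and the denominator in the expansion of $\phi-1$ vanish at the same rate); the underlying algebra is elementary but easy to mis-tune, which is presumably why the authors cite Cooper and the sharper analysis of~\cite{EZB:M/D/1-tails:2006} rather than reproduce the derivation.
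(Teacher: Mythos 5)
The paper does not prove this statement: it is labelled a \emph{Fact} and backed by citations (Cooper's textbook for the stationary mean, and [EZB:M/D/1-tails:2006, Prop.\ 3.4] for the geometric tail). Your proposal therefore cannot be compared against a paper-internal proof; instead it supplies a self-contained derivation of the cited material, and the outline is essentially the standard queuing-theory treatment and is sound. The stochastic-monotonicity coupling justifying $\E[Z_t]\le\lim_\tau\E[Z_\tau]$ and $\Pr[Z_t>k]\le\lim_\tau\Pr[Z_\tau>k]$ is the right way to turn a stationary-distribution computation into a bound valid for every finite $t$; the Pollaczek--Khinchine mean formula with $C_s^2=0$ gives exactly $\rho+\tfrac12\rho^2/(1-\rho)$; and the exponential-change-of-measure / PGF-singularity argument correctly identifies the decay parameter $\phi>1$ solving $\rho(\phi-1)=\log\phi$, with the Taylor expansion $(1-\rho)\delta\approx\delta^2/2$ giving $\phi-1=\Theta(1-\rho)=\Theta(\varepsilon)$ and hence a tail rate $\log\phi=\Theta(\varepsilon)$.

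Two small points worth tightening. First, $\phi^{Z_j}$ is in fact a \emph{martingale} (not merely a supermartingale) on $\{Z_{j-1}\ge1\}$ by the very choice of $\phi$; calling it a supermartingale is harmless but obscures why the regeneration-at-zero argument is needed (the process strictly jumps up in expectation when $Z_{j-1}=0$). Second, the asserted identity $\Pr[Z_\infty=0]=1-\rho$ is for the number-in-system at a PASTA-sampled time, which is what the embedded chain converges to; it is worth saying this explicitly, since it is the one place the argument quietly uses an M/G/1-specific ergodic fact rather than pure Markov-chain algebra. With those caveats your derivation is a valid, more elementary alternative to the citations the paper relies on; what the paper's approach buys is brevity and the sharper explicit constants in the tail bound that [EZB:M/D/1-tails:2006] provides, neither of which is needed for the $\Theta(\cdot)$-level conclusions used in \cref{mainProp}.
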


\subsection{Putting the Pieces Together}\label{subsec:pieces:together}
%******************************************************************

\def\refrel#1#2#3{\stackrel{\text{#1 \ref{#2}}}{#3}}
We now have everything in place to prove \cref{mainProp} regarding solving our linear systems. 

\begin{proof}[Proof of \cref{mainProp}]
    By the observation made in \cref{subsec:techniques}, we may assume that the 
    random variables $(H_j)_{j ∈ [n+L-1]}$, $(H'_j)_{j ∈ [n+L-1]}$, $(X_j)_{j \ge 0}$ and $(Z_j)_{j \ge 0}$ and the three corresponding couplings
    are realized on one common probability space. 

		By \cref{fact:M/D/1-queues}(\ref{item:tail-Z}) it is possible to choose $L = Θ((\log m)/ε)$ while
		guaranteeing $\Pr[Z_j > L/2] = \O(m^{-2})$ for all $j\ge0$. 
            
    By the choice of $L$ and the union bound, the event $\Emax = \{ ∀ j ∈ [n+L-1]\colon Z_j ≤ L/2\}$ occurs whp. 
		Conditioned on $\Emax$ and the high probability event $\Em$ from \cref{lem:poissonise} we have
    \[H_j \refrel{Lem.}{lem:poissonise}{≤} H_j' \refrel{Lem.}{lem:M/B/1-to-M/D/1}{≤} X_j + \log(4/ε') 
		\stackrel{\text{Eq. \ref{eq:X-to-Z}}}{≤} Z_j + \log(4/ε') \stackrel{\Emax}{≤} L/2 + \log(4/ε') ≤ L - 2\log m. \]
		By using \cref{lem:connection-gauss-to-lp}(\ref{item:success-if-heights-small}) we conclude that \refgauss succeeds with high probability.
    
    Along similar lines we get, for each $j\in[n+L-1]$:
    \begin{align*}
        E[H_j] &\refrel{Lem.}{lem:poissonise}{≤} \E[H'_j \mid \Em] ≤ \tfrac 1{\Pr[\Em]}\E[H'_j] \refrel{Lem.}{lem:M/B/1-to-M/D/1}{≤} 
				                                                                                                                                                    \tfrac 1{\Pr[\Em]}\E[X_j+\log(4/ε')]\\
        &\stackrel{\text{Eq. \ref{eq:X-to-Z}}}{≤} \tfrac 1{\Pr[\Em]}\E[Z_j+\log(4/ε')] \stackrel{\text{\cref{fact:M/D/1-queues}(\ref{item:exp-Z})}}{≤} 
				                                                                                                                                                                             \tfrac 1{\Pr[\Em]}(\O(1/ε)+\log(4/ε')) 
				= \O(1/ε).
    \end{align*}
    By \cref{lem:connection-gauss-to-lp}(\ref{item:number-of-additions}) the expected number of row additions performed by a successful run 
		of \refgauss is therefore at most $\E[\sum_{j ∈ [n+L-1]} H_j] = \O(m/ε)$. 
		Since unsuccessful runs happen with probability $\O(1/m)$ and can perform at most $mL$ additions (each row can only be the target of $L$ row additions), 
		the overall expected number of additions is not skewed by unsuccessful runs, hence is also in $\O(m/ε)$. This finishes the proof of~\cref{mainProp}.
\end{proof}

\begin{remark*}
The analysis described in this section works in exactly the same way if instead of $\mathbb{F}_2$ a larger finite field $\mathbb{F}$ is used. 
A row in the random matrix is determined by a random starting position and a block of $L$ random elements from $\mathbb{F}$. 
A row operation in the Gaussian elimination now consists of a division, a multiplication of a block with a scalar
and a row addition.  The running time of the algorithm will increase at least by a factor of $\log(\lvert\mathbb{F}\rvert)$ (the bitlength of a field element),
and further increases depend on how well word parallelism in the word RAM can be utilized for operations like row additions 
and scalar product computations. 
(In~\cite{Vigna:Fast-Scalable-Construction-of-Functions:2016}, efficient methods are described for the field of three elements.)
The queue length will become a little smaller, but not significantly, since 
even the M/D/1 queue with arrivals with a Poisson$(1-\varepsilon)$ distribution will lead to average queue length $\Theta(1/\varepsilon)$.
\end{remark*}

\begin{remark*}
An interesting question was raised by a reviewer of the submission: Is anything gained
if we fix the first bit of each block to be 1? When checking our analysis for this case we see
that this 1-bit need not survive previous row operations. However, such a change does improve success probabilities in 
the Robin Hood insertion procedure. If a key $x_i$ finds cell $h_i$ empty, it
occupies this cell, without a coin being flipped. From the point of view of the queues, 
we see that now the derived variable $b_j$ in~\cref{subsec:heights:to:markov} is 1 if $k_j>0$ and geometrically distributed only if $k_j=0$. 
As in the preceding remark, this brings the process closer to the M/D/1 queue with arrivals with a Poisson$(1-\varepsilon)$ distribution
and deterministic service time 1, but the average queue length remains $\Theta(1/\varepsilon)$.
Still, it may be interesting to check by experiments if improvements result by this change. 
\end{remark*}

%%%%%%%%%%%%%%%%%%%%%%%%%%%%%%%%%%%%%%%%%%%%%
\section{A New Retrieval Data Structure}    \label{sec:main:retrieval}
%%%%%%%%%%%%%%%%%%%%%%%%%%%%%%%%%%%%%%%%%%%%%

With \cref{mainProp} in place we are ready to carry out the analysis of 
the retrieval data structure based on the new random matrices as described in~\cref{subsec:retrieval}. 
The proof of \cref{thm:main} is more or less straightforward.

\begin{proof}[Proof of \cref{thm:main}]
    Denote the $m$ elements of $S$ by $x₁,…,x_m$, let $n = \frac{1}{1-ε}m$, $L = Θ(\frac{\log m}{ε})$ the number from \cref{mainProp} 
		and $h \colon \U → [n] × \{0,1\}^L$ a fully random hash function. 
		For \construct, we associate the values $(s_i,p_i) := h(x_i)$ with each $x_i$ for $i ∈ [m]$ 
		and interpret them as a random band matrix $A = (a_{ij})_{i ∈[m],\, j ∈ [n+L-1]}$, 
		where for all $i ∈ [m]$ row $a_i$ contains the pattern $p_i$ starting at position $s_i$ and 0's everywhere else. 
		Moreover, let $\vec{b} ∈ \{0,1\}^m$ be the vector with entries $b_i = f(x_i)$ for $i ∈ [m]$. 
		We call \refgauss (\cref{algo:gauss}) with inputs $A$ and $\vec{b}$, obtaining (in case of success) a solution 
		$\vec{z} ∈ \{0,1\}^{n+L-1}$ of $A\vec{z} = \vec{b}$. The retrieval data structure is simply $\mathsf{DS}_f = \vec{z}$.
    
    By \cref{mainProp} \construct succeeds \whp\footnotemark{} (establishing (i)) 
		and performs $\O(m/ε)$ row additions. Since additions affect only $L = \O(\frac{\log m}{ε})$ 
		consecutive bits, and since a word RAM can deal with $\O(\log m)$ bits at once, a single row addition costs $\O(1/ε)$ time, 
		leading to total expected running time $\O(m/ε²)$ (which establishes (ii)).
    \footnotetext{If success with probability $1$ is desired, then in case the construction fails with hash function $h₀ = h$, we just restart the construction with different hash functions $h₁,h₂,…$. In this setup, $\mathsf{DS}_f$ must also contain the seed $s ∈ ℕ₀$ identifying the first hash function $h_s$ that led to success.}
    
    The data structure $\mathsf{DS}_f = \vec{z}$ occupies exactly $\frac{1}{1-ε}m + L-1 < (1+2ε)m$ bits. Replacing $ε$ with $ε/2$ yields the literal result~(iii).
    
    To evaluate $\query(\mathsf{DS}_f,y)$ for $y ∈ \U$, we compute $(s_y,p_y) = h(y)$ and the scalar product 
		$b_y = \langle \vec{z}\,[s_y…s_y{+}L{-}1],p_y\rangle ≔ \bigoplus_{j = 1}^L \vec{z}_{s_y{+}j{-}1} · p_{yj}$. 
		By construction, this yields $b_i = f(x_i)$ in the case that $y = x_i$. 
		To obtain (iv), observe that the scalar product of two binary sequences of length $L = \O(\log(n)/ε)$ 
		can be computed using $\O(1/ε)$ bit parallel \textsc{and} and \textsc{xor} operations, 
		as well as a single \textsc{parity} operation on $\O(\log m)$ bits, which can be assumed 
		to be available in constant time.%
		%\footnote{We may simply tabulate the parity of all bit strings of length $\frac12{\log m}$ and compute the parity with $\O(1)$ accesses to the table.}
\end{proof}
\begin{remark*}As the proof of \cref{mainProp} remains valid for arbitrary fixed finite fields in place of $\mathbb{F}_2$,
the same is true for \cref{thm:main}.
This is relevant for the compact representation of functions with small ranges like $[3]$, where binary encoding 
of single symbols implies extra space overhead. Such functions occur in 
data structures for perfect hash functions~\cite{BPZ:Practical:2013,Vigna:Fast-Scalable-Construction-of-Functions:2016}.
\end{remark*}

%%%%%%%%%%%%%%%%%%%%%%%%%%%%%%
\section{Input Partitioning} 
\label{sec:input:partitioning}
%%%%%%%%%%%%%%%%%%%%%%%%%%%%%%

We examine the effect of a simple trick to improve construction and query times of our retrieval data structure. 
We partition the input into \emph{chunks} using a ``first-level hash function'' and construct a separate retrieval data structure for each chunk.
% \cite{BBOVV:Cache-Oblivious-Peeling:14,BPZ:Practical:2013,DP:Succinct:2008,Vigna:Fast-Scalable-Construction-of-Functions:2016,P:An_Optimal:2009}. 
Using this with chunk size $C = m^\varepsilon$ will reduce the time bounds for construction and query by a factor of $\varepsilon$.
The main reason for this is that we can use smaller block sizes $L$, which in turn makes row additions and inner products cheaper. 
Note that the idea is not new. Partitioning the input has previously been applied in the context of retrieval to reduce construction times, 
especially when ``raw'' construction times are superlinear \cite{DP:Succinct:2008,Vigna:Fast-Scalable-Construction-of-Functions:2016,P:An_Optimal:2009} 
or when performance in external memory settings is an issue \cite{BBOVV:Cache-Oblivious-Peeling:14,BPZ:Practical:2013}. 
Partitioning also allows us to get rid of the full randomness assumption, which is interesting from a theoretical point of 
view~\cite{BPZ:Practical:2013,DW07:Balanced:2007,DR:Applications:2009}.

\begin{remark*}
The reader should be aware that the choice $C = m^{ε}$, which is needed to obtain a speedup of $1/\varepsilon$, 
is unlikely to be a good choice in practice and that this improvement only works for unrealistically large $m$. 
Namely, we use that $\frac{\log m}{m^ε} \ll ε$ for sufficiently large $m$. While the left term is indeed $o(1)$ and the right a constant, 
even for moderate values of $ε$ like 0.05 implausibly large values of $m$ are needed to satisfy the weaker requirement $\frac{\log m}{m^ε} < ε$. 
In this sense, \cref{cor:main} taken literally is of purely theoretical value. 
Still, the general idea is sound and it can give improvements in practice when partitioning less aggressively, say with $C \approx \sqrt{m}$.
For example, the good running times reported in~\cref{sec:experiments} are only possible with this splitting approach.
\end{remark*}

\begin{theorem}
    \label{cor:main}
    The result of \cref{thm:main} can be strengthened in the following ways.
    \begin{enumerate}[{\upshape(i)}]\setlength\itemsep{0em}
            • The statements of \cref{thm:main} continue to hold without the assumption of fully random hash functions being available for free. 
            • The expected construction time is $\O(m/ε)$ (instead of $\O(m/ε²)$).
            • The expected query time is $\O(1)$ (instead of $\O(1/ε)$). Queries involve accessing a (small) auxiliary data structure, so technically not all required data is “consecutive in memory”.
    \end{enumerate}
\end{theorem}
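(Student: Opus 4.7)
The plan is to apply a first-level hash $h_0\colon \U \to [t]$ with $t = \lceil m/C \rceil$ and chunk size $C = m^{\varepsilon}$, obtaining chunks $S_1,\dots,S_t$ of expected size $C$. For each $S_i$ I would independently invoke the construction of \cref{thm:main} with its own secondary hash function, producing per-chunk retrieval bit-vectors that are concatenated into one large bit-vector; an auxiliary array stores the starting offset of each chunk. Crucially, since each chunk now has size roughly $m^{\varepsilon}$, the block length from \cref{mainProp} may be taken as $L_C = \Theta((\log C)/\varepsilon) = \Theta(\log m)$ rather than $\Theta((\log m)/\varepsilon)$, so an entire block fits in $\O(1)$ memory words on a word RAM of word size $\Omega(\log m)$.

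The key quantitative gain follows immediately. I first verify that chunk sizes concentrate: $|S_i|$ is essentially $\Bin(m,1/t)$ with mean $C = m^\varepsilon$, so a Chernoff bound gives $\Pr[|S_i| > 2C] \le \e^{-\Theta(m^\varepsilon)}$, and a union bound over the $t = m^{1-\varepsilon}$ chunks shows that whp all chunks have size at most $2C$. Thus \cref{mainProp} applies to every chunk with block length $L_C$. Since each row addition now touches only $\O(1)$ words, its cost is $\O(1)$ rather than $\O(1/\varepsilon)$; summing the bound $\O(|S_i|/\varepsilon)$ from \cref{mainProp} across chunks yields total expected construction time $\O(m/\varepsilon)$, which gives (ii). A query evaluates $h_0(y)$ in $\O(1)$, looks up the chunk offset in $\O(1)$, and computes an $L_C = \O(\log m)$-bit inner product in $\O(1)$, giving (iii); the offset lookup is exactly what forces the small caveat that not all accesses are to consecutive memory.

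For space I need the auxiliary offset array plus the per-chunk $L_C - 1$ tail-overhead to be absorbed by the $(1+\varepsilon)m$ budget. The offset array holds $t = m^{1-\varepsilon}$ entries of $\O(\log m)$ bits, contributing $\O(m^{1-\varepsilon}\log m)$ bits; the tail-overhead sums to $t\cdot L_C = \O(m^{1-\varepsilon}\log m)$ bits. Both terms are $o(\varepsilon m)$ for sufficiently large $m$ (this is exactly the point highlighted in the preceding remark) and are absorbed by replacing $\varepsilon$ by $\varepsilon/2$ at the outset.

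Finally, for (i) the plan is to plug into the ``split-and-share'' framework of~\cite{DR:Applications:2009,DW07:Balanced:2007}: use a low-independence first-level hash to distribute keys into chunks, share a single short table of truly random bits among all chunks, and index into it with a bounded-independence hash to generate each row. Since the analysis in \cref{mainProp} only uses independence of rows \emph{within} a chunk, and chunks are small enough that the required independence is affordable, this eliminates the fully-random-hash assumption. The main obstacle I anticipate is bookkeeping rather than new ideas: I must choose seed length, degree of independence, and chunk size compatibly so that the per-chunk failure probability remains $\O(m^{-2})$, permitting a union bound over $t$ chunks, and so that the shared random seed plus the restart mechanism (seed $s$ in the footnote of the proof of \cref{thm:main}) still fits into the $o(\varepsilon m)$ space slack. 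These parameter choices are precisely those set up for this purpose in~\cite{DR:Applications:2009}, so I expect the argument to go through essentially verbatim.
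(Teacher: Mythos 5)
Your proposal follows the paper's own proof almost line for line: chunk size $C = m^\varepsilon$, reduction of the block length to $L_C = \Theta(\log m)$, per-chunk retrieval structures concatenated with an auxiliary offset array, and split-and-share (the paper cites the variant in~\cite{BPZ:Practical:2013}, itself based on~\cite{DR:Applications:2009,DW07:Balanced:2007}) to dispense with the free full-randomness assumption. One clarification: with $L_C = \Theta((\log C)/\varepsilon) = \Theta(\log m)$ the per-chunk failure probability guaranteed by \cref{mainProp} is only $\O(1/C)$, not $\O(m^{-2})$, so a union bound over the $t = m^{1-\varepsilon}$ chunks is not available at that block length --- the restart mechanism you also invoke is what actually carries the argument, since each failed chunk is simply rebuilt with a fresh seed and the expected number of retries per chunk is $\O(1)$.
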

\begin{proof} (Sketch.)
  Let $C = m^{ε}$ be the \emph{desired chunk size}.
	In~\cite[Section 4]{BPZ:Practical:2013} it is described in detail
	how a splitting function can be used to obtain chunks that have size within a constant factor of $C$ with high
	probability, and how fully random hash functions on each individual chunk can be provided by a
	randomized auxiliary structure $\mathcal{H}$ that takes only $o(m)$ space. 
	New functions can be generated by switching to new seeds. 
	(This construction is a variation of what is described in \cite{DR:Applications:2009,DW07:Balanced:2007}.)
	This fully suffices for our purposes. 
	We construct an individual retrieval data structure for each chunk with $L=\O(\frac{\log C}{ε}) = \O(\log m)$. 
	Such a construction succeeds in expected time $\O(C/ε)$ with probability $1-\O(1/C)$. 
	In case the construction fails for a chunk, it is repeated with a different seed. 
	At the end we save the concatenation of all $m/C$ retrieval data structures, 
	the data structure $\mathcal{H}$ and an auxiliary array. This array contains, for each chunk, 
	the offset of the corresponding retrieval data structure
	in the concatenation and the seed of the hash function used for the chunk. 
	It is easy to check that the size of all auxiliary data is asymptotically negligible.
    
   The total expected construction time is $\O((m/C)\cdot C/ε)=O(m/ε)$, and since $L=\O(\log m)$, a retrieval query can be evaluated in constant time.
\end{proof}
\begin{remark*} The construction from~\cite{LMSSS:Loss-Resilient:1997} described in item \textsf{(4)} in the list in~\cref{subsec:intro:systems}
can also be transformed in a retrieval data structure. (This does not seem to have been explored up to now.)
The expected running time for \construct is $\O(m\log(1/\varepsilon))$ (better than our $\O(m/\varepsilon)$),
the expected running time for \query is $\O(\log(1/\varepsilon))$, with $\O(\log(1/\varepsilon))$ random accesses in memory.
(Worst case is $\O(1/\varepsilon)$.)
In our preliminary experiments, see~\cref{sec:experiments}, for $m=10^7$, both construction and query times 
of our construction seem to be able to compete well with the construction following~\cite{LMSSS:Loss-Resilient:1997}.
\end{remark*}

%%%%%%%%%%%%%%%%%%%%%%%%%%%%%%%%%%%%%%%%%%%%%%%%%%%%%%%%%%%%%%%%
\section{Experiments}
\label{sec:experiments}
%%%%%%%%%%%%%%%%%%%%%%%%%%%%%%%%%%%%%%%%%%%%%%%%%%%%%%%%%%%%%%%%

We implemented our retrieval data structure following the approach explained in the proof of~\cref{cor:main}, 
except that we used \texttt{MurmurHash3} \cite{Appleby:MurmurHash3:2012} for all hash functions. 
This is a heuristic insofar as we depart from the full randomness assumption of \cref{thm:main}.
We report\footnotemark\ running times and space overheads in \cref{tab:results}, with the understanding that a retrieval data structure 
occupying $N$ bits of memory in total and accommodating $m$ keys has overhead $\frac{N}{m}-1$.
\footnotetext{%
    Experiments were performed on a desktop computer with an Intel${}^{\textrm{\textregistered}}$\,Core i7-2600 Processor @~3.40GHz. 
		Following~\cite{Vigna:Fast-Scalable-Construction-of-Functions:2016}, we used as data set $S$ the first $m = 10^7$ URLs from the \texttt{eu-2015-host} dataset gathered by \cite{BMSV:Crawls:2014} with ${≈}\,$80 bytes per key%
%, and $f \colon \U → \{0,1\}$ is taken to be the parity of the string length. 
. As hash function we used \textsc{MurmurHash3\_x64\_128} \cite{Appleby:MurmurHash3:2012}. Reported query times are averages obtained by querying all elements of the data set once and include the evaluation of murmur, which takes about $25$\,ns on average. The reported figures are medians of 5 executions. 
}
Concerning the choice of parameters, $L = 64$ has practical advantages on a 64-bit machine and $C = 10^4$ seems to go well with it experimentally. As $ε ∈ \{7\%, 5\%, 3\%\}$ decreases, the measured construction time increases as would be expected. This is partly due to the higher number of row additions in successful constructions, but also due to an increased probability for a chunk's construction to fail, which prompts a restart for that chunk with a different seed. Note that, in our implementation, querying an element in a chunk with non-default seed also prompts an additional hash function evaluation.
\paragraph{Competing Implementations.} For comparison, we implemented the retrieval data structures from \cite{BPZ:Practical:2013,DW:Retrieval-log-extra-bits:2019,Vigna:Fast-Scalable-Construction-of-Functions:2016} and the one arising from the construction in \cite{LMSS:Efficient_Erasure:2001}.
(The number $D$ in~\cref{tab:results} is the maximum number of 1's in a row; the average is then $\Theta(\log D)$.)

In \cite{BPZ:Practical:2013}, the rows of the linear systems contain three 1's in independent and uniformly random positions. If the number of columns is $n = m/(1-ε)$ for $ε > 18.2\%$, the system can be solved in linear time by row and column \emph{exchanges} alone. Compared to that method, we achieve smaller overheads at similar running times.

The approaches from \cite{Vigna:Fast-Scalable-Construction-of-Functions:2016} and \cite{DW:Retrieval-log-extra-bits:2019} are different 
in that they construct linear systems that require cubic solving time with Gaussian elimination. 
This is counteracted by partitioning the input into chunks as well as by a heuristic \emph{LazyGauss}-phase of the solver that eliminates many variables before the Method of Four Russians \cite{Bard:Algebraic-Cryptanalysis:2009} is used on what remains. Construction times are higher than ours, but the tiny space overhead achieved in \cite{DW:Retrieval-log-extra-bits:2019} is beyond the reach of our approach.
The systems considered in \cite{Vigna:Fast-Scalable-Construction-of-Functions:2016}  resemble those in \cite{BPZ:Practical:2013}, except at higher densities. 
The systems studied in \cite{DW:Retrieval-log-extra-bits:2019} resemble our systems, except with \emph{two} blocks of random bits per row instead of one.

We remark that our approach is easier to implement than those of 
\cite{DW:Retrieval-log-extra-bits:2019,Vigna:Fast-Scalable-Construction-of-Functions:2016} but more difficult than that of \cite{BPZ:Practical:2013}.

%\section{Theoretical Improvements and Experiments} 
%\label{sec:partitioning}

\begin{table}
    \begin{tabular}{ccrrr}
        \toprule
        & Configuration & Overhead & \construct $[\ms/\textrm{key}]$& \query $[\ns]$\\
        \midrule
        \cite{BPZ:Practical:2013} & $ε = 19\%$ & $23.5\%$ & 0.32 & 59\\
        $\langle$\textsc{new}$\rangle$ &$ε = 7\%, L = 64, C = 10^4$&8.8\%&0.24&52\\
        $\langle$\textsc{new}$\rangle$ &$ε = 5\%, L = 64, C = 10^4$&6.5\%&0.27&54\\
        $\langle$\textsc{new}$\rangle$ &$ε = 3\%, L = 64, C = 10^4$&4.3\%&0.43&61\\
        % \hdashline
        % $\langle$Fuse Graphs$\rangle$ &$c = 0.910, k = 3, ℓ = 100$&12.1\%&0.59&56\\
        % $\langle$Fuse Graphs$\rangle$ &$c = 0.960, k = 4, ℓ = 200$&5.7\%&0.60&59\\
        % $\langle$Fuse Graphs$\rangle$ &$c = 0.985, k = 7, ℓ = 500$&2.7\%&0.71&75\\
        \cite{LMSS:Efficient_Erasure:2001} & $c = 0.9, D = 12$ & 11.1\% & 0.79 & 94\\
        \cite{LMSS:Efficient_Erasure:2001} & $c = 0.99, D = 150$ & 1.1\% & 0.87 & 109\\
        % \midrule
        \cite{Vigna:Fast-Scalable-Construction-of-Functions:2016} &$ε = 9\%, k = 3, C = 10^4$&10.2\%&1.30&58\\
        \cite{Vigna:Fast-Scalable-Construction-of-Functions:2016} &$ε = 3\%, k = 4, C= 10^4$&3.4\%&2.20&64\\
        \cite{DW:Retrieval-log-extra-bits:2019} &$ε = 0.05\%, ℓ = 16, C=10^4$&0.25\%&2.47&56\\
        % $\langle$this work, $ε = 5\%\rangle$ &$10^6$&??&??&??\\
        % $\langle$this work, $ε = 2.5\%\rangle$ &$10^4$&??&??&??\\
        % $\langle$this work, $ε = 1\%\rangle$ &$10^4$&??&??&??\\
        \bottomrule
    \end{tabular}
    \caption{Space overhead and running times per key of some practical retrieval data structures}
    \label{tab:results}
\end{table}

%%%%%%%%%%%%%%%%%%%%%%%%%%%%%%%%%%%%%%%%%%%%%%%%%%%%%%%%%%%%%%%%%
\section{Conclusion}
\label{sec:conclusion}
%%%%%%%%%%%%%%%%%%%%%%%%%%%%%%%%%%%%%%%%%%%%%%%%%%%%%%%%%%%%%%%%

This paper studies the principles of solving linear systems given by a particular kind of sparse random matrices,
with one short random block per row, in a random position. The proof changing the point of view from 
Gaussian elimination to Robin Hood hashing and then to queuing theory. It might be interesting
to find an direct, simpler proof for the main theorem. 
Preliminary experiments concerning an application with retrieval data structures are promising.
The most intriguing property is that evaluation of a retrieval query requires accessing only one (short) block in memory.
  
The potential of the construction in practice should be explored more fully and systematically, 
taking all relevant parameters like block size and chunk size into consideration.
Constructions of perfect hash functions like in~\cite{BPZ:Practical:2013,Vigna:Fast-Scalable-Construction-of-Functions:2016} 
or Bloom filters that combine perfect hashing with fingerprinting~\cite{BM:Survey:2003,DP:Succinct:2008} might profit from our construction.

\paragraph{Acknowledgements} 
    We are very grateful to Seth Pettie, who triggered this research by asking an insightful question regarding ``one block'' while discussing the 
    two-block solution from~\cite{DW:Retrieval-log-extra-bits:2019}. 
    (This discussion took place at the Dagstuhl Seminar 19051 ``Data Structures for the Cloud and External Memory Data''.) 
		Thanks are also due to the reviewers, whose comments helped to improve the presentation. 

\bibliography{bibliographie}
 
\end{document}